\newtheorem{theorem}{Theorem}[section]
\newtheorem{lemma}[theorem]{Lemma}
\newtheorem{corollary}[theorem]{Corollary}
\newtheorem{remark}[theorem]{Remark}
\theoremstyle{definition}
\newcounter{problemctr}[subsection]
\newcommand{\NN}{\mathbb{N}}
\newcommand{\ZN}{\mathbb{Z}}
\newcommand{\QN}{\mathbb{Q}}
\newcommand{\RN}{\mathbb{R}}
\newcommand{\norm}[1]{\lVert #1 \rVert}
\newtype{\class}{\mathbf}
\newcommand{\ER}{\class<\ensuremath{\bm{\exists \RN}}>}
\newcommand{\VR}{\class<\ensuremath{\bm{\forall \RN}}>}
\newcommand{\VER}{\class<\ensuremath{\bm{\forall\exists \RN}}>}
\newcommand{\EVR}{\class<\ensuremath{\bm{\exists\forall \RN}}>}
\newcommand{\VEVR}{\class<\ensuremath{\bm{\forall\exists\forall \RN}}>}
\DeclareMathOperator{\conv}{conv}
\def\RCC8{\textup{RCC8}}
\newcommand{\Krass}{{Krasnosel\textquotesingle ski\u{\i}}}
\title{How Hard is it to be a Star?\\Convex Geometry and the Real Hierarchy}
\author{
{Marcus Schaefer
} \\
{\small School of Computing} \\[-0.13cm]
{\small DePaul University} \\[-0.13cm]
{\small Chicago, Illinois 60604, USA} \\[-0.13cm]
{\small \tt mschaefer@cdm.depaul.edu}\\[-0.13cm]
\and
{Daniel \v{S}tefankovi\v{c}
} \\
{\small Computer Science Department} \\[-0.13cm]
{\small University of Rochester} \\[-0.13cm]
{\small Rochester, NY 14627-0226} \\[-0.13cm]
{\small \tt stefanko@cs.rochester.edu}\\[-0.13cm]
}
\begin{document}

\maketitle

\begin{abstract}
A set in $\RN^d$ is {\em star-shaped} if there is a point in the set that can {\em see} every other point in the set in the sense that the line-segment connecting the points lies within the set. We show that testing whether a non-empty compact smooth region is star-shaped is \VR-complete. Since the obvious definition of star-shapedness has logical form $\exists\forall$, this is a somewhat surprising result, based on \Krass's theorem from convex geometry; we study several related complexity classifications in the real hierarchy based on other results from convex geometry.
\end{abstract}

{{\bfseries\noindent Keywords.} existential theory of the real numbers, \Krass\ theorem, Helly theorem, Kirchberger theorem, Steinitz theorem, Carath\'{e}odory theorem, real hierarchy, computational complexity, semialgebraic sets}

{{\bfseries\noindent MSC Classification.} 68Q15, 68Q17, 14P10}

\section{Introduction}

In an earlier paper~\cite{SS23,SS25} we studied the computational complexity of properties of semialgebraic sets that can be classified at various levels of the real polynomial hierarchy, such as \ER, \VR, \EVR, and \VER\ (for background on the existential theory of the reals and the real polynomial hierarchy, see~\cite{M14,SCM24,SS25,HMMP25}). One of the examples we encountered behaved unexpectedly: the radius problem.

A semialgebraic set $S$ has {\em radius at most $r$} if there is a point $c$, the {\em center}, such that all points in $S$ have distance at most $r$ from $c$. More formally,
\[(\exists c \in \RN^d)(\forall x \in \RN^d)\ x \in S \rightarrow \sum_{i=1}^d (x_i-c_i)^2 \leq r^2.\]
So the straightforward definition of radius gives us an \EVR-characterization of deciding whether $S$ has radius at most $r$. Using Helly's theorem, we were able to obtain a \VER-characterization saying that any $d+1$ balls of radius $r$ with centers in $S$ have a common point, placing the radius problem in $\EVR \cap \VER$. Using linear programming duality, the non-emptiness of the intersection of the $d+1$ balls can be expressed in \VR, leading to a \VR-characterization of the radius problem (for details, see~\cite[Theorem 3.3]{SS23}). Since the problem is easily seen to be \VR-hard, this settles the complexity of the radius problem for semialgebraic sets.

The radius example inspired us to look for further instances of this effect, in which the combination of a result from convex geometry with duality leads to a surprisingly lower complexity classification than what one would have expected based on the definition of the problem.

Our main result in the current paper is on star-shapedness. A set is {\em star-shaped} if it contains a point $c$ that {\em sees} all points in the set in the sense that the line-segments connecting $c$ to any point in the set must belong to the set. We determine the complexity of testing star-shapedness for semialgebraic sets whose boundary is sufficiently smooth; we call these sets smooth regions (for the formal definition see Section~\ref{sec:starER}).

\begin{theorem}\label{thm:starVR}
  Deciding whether a non-empty compact smooth region 
  is star-shaped is \VR-complete.
\end{theorem}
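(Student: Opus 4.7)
The plan is to prove matching $\VR$-hardness (lower bound) and $\VR$-membership (upper bound). For membership, the naive characterization $\exists c\in S:\forall x\in S,\ [c,x]\subseteq S$ is only $\EVR$, so the complexity must be lowered by two levels. The first step is \Krass's theorem: a nonempty compact $S\subseteq\RN^d$ is star-shaped iff for every $d+1$ points $x_1,\ldots,x_{d+1}\in S$ there is a common point $c\in S$ that sees all of them. Substituting this yields a $\VER$-characterization, directly analogous to the use of Helly's theorem for the radius problem of~\cite{SS23}.

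The heart of the argument is eliminating the remaining inner $\exists$. I would exploit smoothness of $\partial S$ to recast visibility in terms of tangent half-planes: on a smooth region the segment $[c,x]$ can escape $S$ only via a transverse or tangential boundary crossing, so the condition $[c,x]\subseteq S$ should be expressible as a closed half-space condition at each relevant boundary point, with the half-space coming from the tangent plane of $\partial S$. For fixed $x_1,\ldots,x_{d+1}\in S$, the existence of a common visibility center in $S$ then reduces to non-emptiness of an intersection of convex semialgebraic pieces, and a Farkas/LP-duality argument (or a direct application of a convex-geometric theorem such as Helly or Carath\'eodory, echoing the radius argument) converts this non-emptiness into a $\forall$-statement with quantifier-free body. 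Combined with the outer $\forall x_1,\ldots,x_{d+1}$, the full characterization collapses to a single $\forall$-formula, placing star-shapedness in $\VR$.

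For the lower bound, I would reduce from a canonical $\VR$-complete problem such as the universal first-order theory of polynomial inequalities over $\RN$. Given a universal sentence $\forall y_1,\ldots,y_n\,\varphi(y_1,\ldots,y_n)$, the plan is to construct a compact smooth semialgebraic region $S_\varphi$ whose star-shapedness encodes the validity of $\varphi$: start from a convex body and carve in semialgebraically-defined concavities parametrized by $y_1,\ldots,y_n$, so that a legitimate star-center must simultaneously dodge the ``bad'' configurations corresponding to $\neg\varphi$. Smoothing the resulting boundary while preserving semialgebraic definability is a standard bookkeeping step in this area.

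The main obstacle I anticipate is the dualization in the upper bound. For the radius problem, the dual objects are balls, whose convexity makes Helly and LP-duality directly applicable; visibility sets in a non-convex smooth region are not convex in general, so the smoothness of $\partial S$ has to be used in an essential way to reduce visibility to a convex (half-space) condition before any duality can apply. Identifying the right surrogate for the visibility set---presumably built from tangent half-planes at appropriate boundary points---and verifying that its non-emptiness corresponds exactly to the existence of a visibility center is where I expect the proof to spend most of its effort.
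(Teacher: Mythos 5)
Your hardness sketch is fine in outline (the paper simply reuses the \VR-hardness reduction for convexity from~\cite[Theorem 3.1]{SS23}), but the membership argument has a genuine gap, in two places. First, substituting \Krass's theorem in its standard form does not give a \VER-characterization: the inner condition ``$c$ sees $x_i$'' means $[c,x_i]\subseteq S$, which hides another universal quantifier, so you only get \VEVR\ (the paper makes this point explicitly). Second, and more seriously, your plan to eliminate the inner existential fails at the step where you claim that, \emph{for fixed} $x_1,\ldots,x_{d+1}\in S$, the existence of a common visibility center reduces to non-emptiness of an intersection of convex semialgebraic pieces. The visibility set $S_x=\{c: [c,x]\subseteq S\}$ is star-shaped with respect to $x$ but not convex in a non-convex region, and there is no family of half-spaces determined by $x_1,\ldots,x_{d+1}$ alone whose intersection equals $\bigcap_i S_{x_i}$: the tangent half-space containments $S_z\subseteq H_z$ go only one way, and the relevant boundary points $z$ depend on the putative center, not just on the $x_i$. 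So no Farkas/LP-duality argument applies at the per-tuple level, and the $\forall\exists\forall$ shape does not collapse this way.

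What actually makes the argument work is changing what the outer universal quantifier ranges over. The paper proves a half-space version of \Krass's theorem (Theorem~\ref{thm:ourKthm}): a non-empty compact $S$ is star-shaped if and only if every $d+1$ \emph{supporting half-spaces} of $S$ have a common point. This equivalence is global, not per-tuple: one direction uses \Krass's lemma (Lemma~\ref{lem:Klemma}, producing a spherical point whose supporting half-space misses any point that fails to see some $y\in S$), and the other uses Helly's theorem together with a bounding-box compactness trick. Smoothness then enters exactly where you guessed, but in this reformulated statement: for a smooth region $S=\{x: f(x)\geq 0\}$ with $\nabla f\neq 0$ on $\partial S$, the supporting half-space at a boundary point $z$ is the explicit gradient half-space $H_z=\{y:(y-z)\cdot(\nabla f)(z)\geq 0\}$, and a star center lies in $H_z$ for every boundary point $z$. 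Star-shapedness thus becomes: for all $z_1,\ldots,z_{d+1}$ with $f(z_i)=0$, the half-spaces $H_{z_1},\ldots,H_{z_{d+1}}$ intersect---now the inner existential is plain linear feasibility in one vector variable, and Farkas/LP duality turns it into a universal statement, yielding a pure $\forall$-formula. In short, your instinct (tangent half-spaces plus duality) is the right one, but it must be packaged as a reproof of \Krass's theorem with the quantification moved from points of $S$ to boundary points and their tangent half-spaces; without that global lemma, the convexification step your dualization relies on is false.
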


The result from convex geometry behind the \VR-membership in Theorem~\ref{thm:starVR} is \Krass's characterization of star-shapedness. As part of the \VR-hardness argument we establish a new \ER-hardness result which may be of independent interest: testing whether a polynomial contains a zero in the unit ball is \ER-complete, even if the polynomial is strictly positive outside the unit ball, see Theorem~\ref{thm:bdfeasstrong} for details.

\begin{remark}[Assumptions]
\Krass's characterization needs compactness, which is why we require compactness in Theorem~\ref{thm:starVR}. We also require the set to be non-empty, since testing non-emptiness is \ER-complete, and empty sets are not star-shaped by definition; so Theorem~\ref{thm:starVR} would not hold without the non-emptiness assumption (unless $\ER = \VR$), or we redefine empty sets to be star-shaped (which is not done traditionally). Therefore, in the language of complexity theory, Theorem~\ref{thm:starVR} is about star-shapedness as a promise problem, rather than as a decision problem: we are making assumptions about the input we cannot verify within the complexity class within which we classify the problem~\cite{G06}.

On the other hand, being bounded is \VR-complete~\cite[Proposition 6.4,Corollary 9.4]{BC09}. The complexity of being closed for a semialgebraic set is unknown; for a basic semialgebraic set, it is known to be \VR-complete~\cite[Theorem 6.15]{BC09}; since our smooth regions are basic semialgebraic sets it follows that testing compactness for smooth regions lies in \VR. So we could modify Theorem~\ref{thm:starVR} to state that testing whether a non-empty smooth region is star-shaped and compact is \VR-complete.  
\end{remark}

\subsection{Context and Overview}

There are quite a number of results already for which the usual definition of a problem does not agree with its eventual complexity classification. Many of these examples involve properties of semialgebraic sets, and go back to papers by Cucker, Rossell\'{o}~\cite{CR92}, Koiran~\cite{K99,K00}, and B\"{u}rgisser, Cucker~\cite{BC09}.

\begin{description}
  \item[Unboundedness.] A semialgebraic set $S$ is {\em unbounded} if for every $r > 0$ there is an $x \in S$ such that $\norm{x} > r$, so the obvious complexity upper bound on unboundedness is \VER, but the universal quantifier can be replaced with
  the exotic quantifier $H$ introduced by Koiran~\cite{K99}; $H$ quantifies over all sufficiently small values of a variable. The quantifier $H$ can be eliminated in this case, leading to an upper bound of \ER, shown in ~\cite[Corollary 9.4]{BC09} extending~\cite{CR92}. Other examples that depend on eliminating an $H$-quantifier include testing whether a polynomial or a semialgebraic set has an isolated zero~\cite{BC09,SS17}, testing whether a basic semialgebraic set is closed or compact~\cite{BC09}, the distance between two semialgebraic sets~\cite{SS17}, and the angular resolution of a graph~\cite{S23b}.
  \item[Dimension.] A semialgebraic set $S \subseteq \RN^d$ has dimension $d$ if there is a point $x \in \RN^d$ and an $\varepsilon > 0$
  so that all $y$ of distance at most $\varepsilon$ from $x$ lie in $S$. By this definition the dimension problem lies in \EVR, but Koiran~\cite{K99} saw that the problem can be expressed using the exotic quantifier $\exists^*$, which expresses that a property holds for an open set; moreover, he showed that the exotic quantifier can be replaced with standard existential quantifiers, eventually placing the dimension problem into \ER, making it \ER-complete. Other examples that can be captured using $\exists^*$ or $\forall^*$ (expressing ``for a dense set''), include density properties of semialgebraic sets~\cite{BC09}; a hyperplane locally supporting a semialgebraic set requires elimination of both $H$ and $\exists^*$ quantifiers~\cite{BC09}.
\end{description}

There are also examples not based on exotic quantifiers:

\begin{description}
  \item[Art Gallery Problem.] An art gallery (a polygon) can be guarded by $k$ guards if there is a placement of the $k$ guards in the art gallery so that every point in the art gallery can be seen by at least one of the guards. This definition gives an \EVR-characterization of the art gallery problem, but the inner two universal quantifiers can be removed by using the linear structure of the polygon~\cite{AAM22}. This implies that the art gallery problem lies in \ER, and it is actually \ER-complete, as shown by Abrahamsen, Adamaszek, Miltzow~\cite{AAM22}, and Stade~\cite{S23c}. 
\end{description}

The examples in this paper are different in that they rely on duality and characterizations from convex geometry by Carath\'{e}odory, Steinitz, Kirchberger, and \Krass\ to flip quantifiers. The only previous result of this nature we are aware of is the radius problem we discussed above.

Section~\ref{sec:CS} illustrates our approach with some easy examples: we first show that convex hull membership lies in \ER; this example only requires Ca\-ra\-th\'{e}o\-dory's theorem, and no duality. Pushing slightly farther, membership in the interior of the convex hull also lies in \ER, and for that result we combine duality with Steinitz's theorem, which extends Carath\'{e}odory.

In Section~\ref{sec:Kirch} we turn to Kirchberger's theorem and separability by hyperplanes, which turns out to be \VR-complete, rather than the obvious \EVR. Finally, Section~\ref{sec:Krass} applies \Krass-theorem to shar-shapedness. We conclude the paper with a section on open problems.

Table~\ref{tbl:prop} summarizes complexity results in the real hierarchy depending on results from convex geometry.

\begin{table}[!htb]
\centering
\begin{tabular}{ |l|l|l|l| } 
\hline
    problem & complexity & reference & based on \\ \hline\hline
    convex hull & \ER-complete & Theorem~\ref{thm:inconvER} & Carath\'{e}odory \\ \hline
    interior convex hull & \ER-complete & Corollary~\ref{cor:inintconvER} & Steinitz \\ \hline
    (interior of) positive hull & \ER-complete & Corollary~\ref{cor:inposER} & Carath\'{e}odory \\ \hline
    \parbox{1.4in}{(strict) hyperplane \\separability} & \VR-complete & \parbox{1in}{Theorem~\ref{thm:strictsepVR},\\ Corollary~\ref{cor:sepVR}} & Kirchberger \\ \hline
    star-shapedness$^*$ & \VR-complete & Theorem~\ref{thm:starVR} & \Krass \\ \hline
    radius & \VR-complete & \cite[Theorem 3.3]{SS23} & Helly \\ \hline
\end{tabular}
\caption{Properties of semialgebraic sets and their complexity based on results from convex geometry. The starred result for star-shaped sets is  valid for non-empty, compact smooth regions only.}\label{tbl:prop}
\end{table}

\subsection{Duality}

Several of our proofs will make use of linear programming duality. We will use this type of duality in two forms known as Fredholm's alternative~\cite[Corollary 3.1b]{S86} and Farkas' lemma~\cite{F02}, also~\cite[Section 7.1]{S86}. 

\begin{lemma}[Fredholm]\label{lem:fredholm}
 Let $A \in \RN^{m\times n}$ and $b \in \RN^m$. Then the system $Ax = b$ is solvable if and only if there is no $v\in \RN^n$ with $A^Tv = 0$ and $b^Tv \neq 0$
 (equivalently, for all $v\in \RN^n$ either $A^T v\neq 0$ or $b^T v =0$). 
\end{lemma}

The lemma is known as Fredholm's alternative since it implies that either $Ax = b$ or $A^Tv = 0,
\ b^Tv = 1$ is solvable. Fredholm's lemma is generalized by Farkas lemma which captures inequalities.


\begin{lemma}[Farkas] \label{lem:farkas}
Let $A\in{\mathbb{R}}^{m\times n}$ and $b\in{\mathbb{R}}^m$. Then the system $Ax\geq b$ is solvable if and only if there is no $v\in\RN^n\geq 0$
with $A^T v = 0$ and $b^T v > 0 $  (equivalently, for all $v\in \RN^n_{\geq 0}$ either $A^T v\neq 0$ or $b^T v \leq 0$). 
\end{lemma}

\section{Carath\'{e}odory, Steinitz and the Convex Hull}\label{sec:CS}

We say a point $p$ is the {\em convex combination} of a set of points $p_1, \ldots, p_n$ if $p = \sum_{i\in [n]} \lambda_i p_i$
with $\sum_{i \in [n]} \lambda_i = 1$ and $\lambda_i \geq 0$ for all $i \in [n]$. We also say that $p$ lies in the {\em convex hull} $\conv(p_1, \ldots, p_n)$ of $p_1, \ldots, p_n$.
Carath\'{e}odory proved the following theorem~\cite[Theorem 1.21]{V64}.

\begin{theorem}[Carath\'{e}odory]\label{thm:Cthm}
 A point belongs to the convex hull of a set in $\RN^d$ if and only if it is the convex combination of at most $d+1$
 points of the set.
\end{theorem}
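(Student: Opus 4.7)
The plan is to prove the two directions separately, with the ``if'' direction being immediate and the ``only if'' direction requiring a standard reduction argument based on linear dependence.

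For the ``if'' direction, I would simply note that the convex hull is closed under convex combinations, so any convex combination of (at most $d+1$) points of the set lies in the convex hull. The interesting content is in the ``only if'' direction.

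For the ``only if'' direction, the plan is as follows. Suppose $p$ lies in the convex hull, so $p = \sum_{i=1}^{n} \lambda_i p_i$ with $\lambda_i \geq 0$, $\sum_i \lambda_i = 1$, and (after dropping zero-weight terms) $\lambda_i > 0$ for all $i$. I would argue by induction on $n$: if $n \leq d+1$ there is nothing to prove, so assume $n \geq d+2$. Then the $n-1$ vectors $p_2 - p_1, \ldots, p_n - p_1$ lie in $\RN^d$ and must be linearly dependent, giving scalars $\mu_2, \ldots, \mu_n$, not all zero, with $\sum_{i=2}^n \mu_i (p_i - p_1) = 0$. Setting $\mu_1 := -\sum_{i=2}^n \mu_i$, one obtains $\sum_{i=1}^n \mu_i p_i = 0$ and $\sum_{i=1}^n \mu_i = 0$ with the $\mu_i$ not all zero.

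The key trick is then to perturb the $\lambda_i$ along the direction $\mu_i$. For any $t \in \RN$, the coefficients $\lambda_i' := \lambda_i - t\mu_i$ satisfy $\sum_i \lambda_i' p_i = p$ and $\sum_i \lambda_i' = 1$. I would choose $t$ to be the smallest value such that some $\lambda_i'$ becomes zero while all remain nonnegative, namely $t = \min\{\lambda_i/\mu_i : \mu_i > 0\}$ (since $\sum_i \mu_i = 0$ and the $\mu_i$ are not all zero, at least one $\mu_i$ is strictly positive, so this minimum exists). The resulting representation writes $p$ as a convex combination of at most $n-1$ of the $p_i$, and iterating this reduction eventually yields a representation using at most $d+1$ points. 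The only subtle step is ensuring the existence of a positive $\mu_i$, which follows because $\sum_i \mu_i = 0$ with some $\mu_i \neq 0$; the remainder is routine.
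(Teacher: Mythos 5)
Your proof is correct and is the standard perturbation argument for Carath\'{e}odory's theorem: the paper itself gives no proof, citing Valentine~\cite[Theorem 1.21]{V64}, and the classical proof found there proceeds exactly as you do, by producing an affine dependence $\sum_i \mu_i p_i = 0$, $\sum_i \mu_i = 0$, and sliding the coefficients along it until one vanishes. All the delicate points are handled properly, in particular the existence of a strictly positive $\mu_i$ (forced by $\sum_i \mu_i = 0$ with not all $\mu_i$ zero) and the choice $t = \min\{\lambda_i/\mu_i : \mu_i > 0\}$, which keeps all coefficients nonnegative while zeroing at least one.
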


With this we can settle the complexity of membership in the convex hull of a semialgebraic set.

\begin{theorem}\label{thm:inconvER}
 Deciding whether a point $q \in \QN^d$ lies in the convex hull of a semialgebraic set is \ER-complete.
\end{theorem}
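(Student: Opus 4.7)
The plan is to establish the two directions separately: membership in \ER\ via Carath\'{e}odory's theorem, and \ER-hardness via a reduction from the non-emptiness problem for semialgebraic sets.

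For the upper bound I would start from the defining property that $q \in \conv(S)$ iff $q$ is a convex combination of finitely many points of $S$. A priori this existentially quantifies over an unbounded number of witnesses, which is not an \ER\ sentence. Theorem~\ref{thm:Cthm} (Carath\'{e}odory) bypasses this obstruction by guaranteeing that $d+1$ witnesses suffice. The statement $q \in \conv(S)$ then becomes: there exist $p_1, \ldots, p_{d+1} \in S$ and $\lambda_1, \ldots, \lambda_{d+1} \geq 0$ with $\sum_{i=1}^{d+1} \lambda_i = 1$ and $q = \sum_{i=1}^{d+1} \lambda_i p_i$. Since $S$ is presented by a quantifier-free semialgebraic formula in the input, each predicate $p_i \in S$ contributes no new quantifiers, so the whole characterization is an existential first-order sentence over $\RN$ of polynomial size, placing the problem in \ER.

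For hardness I would reduce from the \ER-complete problem of deciding non-emptiness of a semialgebraic set. Given $S \subseteq \RN^d$, let $S' := S \cup (-S)$, which is again semialgebraic (its defining formula is obtained from that of $S$ by disjoining a copy with each variable negated), and take the rational query point $q := 0 \in \QN^d$. If $S$ is non-empty, picking any $x \in S$ yields $0 = \tfrac{1}{2}x + \tfrac{1}{2}(-x) \in \conv(S')$; conversely, if $S$ is empty then so is $S'$, whence $\conv(S') = \emptyset$. This is a polynomial-time many-one reduction and yields \ER-hardness.

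The only conceptual step is invoking Carath\'{e}odory to cap the number of existential witnesses at $d+1$, which is precisely the template the paper advertises; there is no serious technical obstacle, and the hardness direction is a standard symmetrization gadget.
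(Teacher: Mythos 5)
Your proposal is correct and takes essentially the same route as the paper: membership via Carath\'{e}odory's $d+1$-witness bound yielding a purely existential formula, and hardness via the same symmetrization gadget with query point $0$ (the paper reduces from bounded feasibility and sets $S = \{-x, x : f(x) = 0\}$, which is exactly your $S \cup (-S)$ applied to the zero set of $f$). The minor differences---reducing from general semialgebraic non-emptiness rather than bounded feasibility, and spelling out that the formula stays quantifier-free and polynomial-size---are immaterial, since both source problems are \ER-complete.
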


For \ER-hardness we reduce from a problem on {\em integer polynomials}, that is, polynomials with integer coefficients. The {\em bounded polynomial feasibility} problem asks whether for a given integer polynomial $f: \RN^d \rightarrow \RN$, there is an $x \in B^d(0,1) := \{x \in \RN^d:\ \norm{x} \leq 1\}$ such that $f(x) = 0$? The bounded feasibility problem is \ER-complete~\cite[Lemma 3.9 with $f = \sum f^2_i$]{S13}; the proof shows that the problem remains \ER-complete if $f$ is non-negative, has total degree at most $4$ and either $f$ is positive if it does not have a zero in $B(0,1)$. 

\begin{proof}
To see \ER-hardness let $f: \RN^d \rightarrow \RN$ be an instance of the bounded polynomial feasibility problem.
Define $S = \{-x:\ f(x) = 0, x \in B^d(0,1)\} \cup \{x:\ f(x) = 0, x \in B^d(0,1)\}$. If $f$ has a zero in $B^d(0,1)$, then $0$ belongs to the convex hull of $S$, otherwise $S$ is empty.

To argue membership, let $S$ be the semialgebraic set defined by formula $\varphi$, so $S = \{x \in \RN^d:\ \varphi(x)\}$. Using Theorem~\ref{thm:Cthm} we can express that $q$ lies in the convex hull of $S$ as
\begin{align*}
(\exists a_1, &\ldots, a_{d+1} \in \RN^d)(\exists \lambda_1, \ldots, \lambda_{d+1} \in \RN_{\geq 0}) \\
        & \bigwedge_{i \in [d+1]} \varphi(a_i) \wedge \sum_{i \in [d+1]} \lambda_i a_i = q \wedge \sum_{i \in [d+1]} \lambda_i = 1 .
\end{align*}
This establishes membership in \ER.
\end{proof}

Steinitz proved an extension of Carath\'{e}odory's result that captures the interior of the convex hull~\cite[Theorem 10.3]{E93}.

\begin{theorem}[Steinitz]\label{thm:Sthm}
 A point lies in the interior of the convex hull of a set in $\RN^d$ if and only if it lies in the interior of the convex hull of at most $2d$ points in the set.
\end{theorem}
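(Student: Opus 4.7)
My plan is to prove the two directions separately. The reverse direction is immediate since $\conv(T) \subseteq \conv(S)$ whenever $T \subseteq S$. For the forward direction, I would translate so that $p = 0$ lies in the interior of $\conv(S)$.

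First I would reduce to a finite witness: for sufficiently small $\varepsilon > 0$, each of the $2d$ points $\pm \varepsilon e_i$ lies in $\conv(S)$, so by Carath\'{e}odory's theorem (Theorem~\ref{thm:Cthm}) each can be written as a convex combination of at most $d+1$ points of $S$. The union of these at most $2d(d+1)$ witnesses is a finite subset $T_0 \subseteq S$ whose convex hull contains all $\pm \varepsilon e_i$, hence $0$, in its interior. Now pick a finite $T \subseteq S$ of minimum cardinality $m$ satisfying $0 \in \mathrm{int}(\conv(T))$; the task is to show $m \leq 2d$.

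Exploiting the minimality of $T$, for each $x_i \in T$ we have $0 \notin \mathrm{int}(\conv(T \setminus \{x_i\}))$, so separation provides a nonzero $\alpha_i \in \RN^d$ with $\alpha_i \cdot x_j \leq 0$ for all $j \neq i$. Because $0 \in \mathrm{int}(\conv(T))$ forbids any hyperplane through $0$ from placing all of $T$ on one closed side, we must also have $\alpha_i \cdot x_i > 0$. Collecting these, the $m \times m$ matrix $A$ defined by $A_{ij} = \alpha_i \cdot x_j$ has rank at most $d$, strictly positive diagonal, and non-positive off-diagonal.

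The main obstacle is to deduce $m \leq 2d$ from this sign-and-rank structure. My first attempt would be a partition-style linear-algebra argument: as soon as $m \geq d+2$ the points admit an affine dependence $\sum \mu_i x_i = 0$ with $\sum \mu_i = 0$, and pairing with $\alpha_k$ yields $\mu_k A_{kk} = -\sum_{i \neq k} \mu_i A_{ki}$ for every $k$; tracking signs through this identity should force both the positive-support and the negative-support indices of $\mu$ to span linearly independent configurations in $\RN^d$, giving $m \leq 2d$. As a fallback I would try induction on $d$: project $S$ along a coordinate direction onto a hyperplane $H$, apply the inductive hypothesis in $H \cong \RN^{d-1}$ to obtain a witness $T_0 \subseteq S$ of size at most $2(d-1)$ with $0 \in \mathrm{int}(\conv(\pi T_0))$ inside $H$, and adjoin two points of $S$ from opposite sides of $H$ to restore the missing dimension. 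The delicate part of that approach is the lifting step: a naive choice of the two extra points can leave $0$ on the boundary of the lifted convex hull rather than in its interior, so the two lifting points must be coordinated with the projected witness.
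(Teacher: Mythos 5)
The paper does not actually prove this statement---it imports Steinitz's theorem from the literature, citing \cite[Theorem 10.3]{E93}---so your attempt has to be judged against the classical proofs. Your overall architecture is a genuine and standard one (it is essentially the minimal-witness proof, e.g.\ the one in B\'{a}r\'{a}ny's \emph{Combinatorial Convexity}): the reduction to a finite subset via Carath\'{e}odory applied to $\pm\varepsilon e_1,\ldots,\pm\varepsilon e_d$ is correct, passing to a minimum-cardinality witness $T=\{x_1,\ldots,x_m\}$ is correct, and the extraction of nonzero functionals $\alpha_i$ with $\alpha_i\cdot x_i>0$ and $\alpha_i\cdot x_j\le 0$ for $j\ne i$, giving an $m\times m$ matrix $A$ of rank at most $d$ with positive diagonal and nonpositive off-diagonal, is exactly the right setup.

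The genuine gap is that the crux---deducing $m\le 2d$ from this sign-and-rank structure---is precisely what you do not prove, and the mechanism you sketch for it fails. First, a single dependence $\sum_i \mu_i x_i=0$ exists as soon as $m\ge d+1$, so no amount of sign-tracking on \emph{one} dependence can detect the threshold $2d$; if your argument worked it would prove the false bound $m\le d+1$, contradicted by the cross-polytope example the paper itself mentions after the theorem statement. (Note also that since your separating hyperplanes pass through $0$, only the linear relation $\sum_i\mu_i x_i=0$ enters the pairing with $\alpha_k$; the affine condition $\sum_i\mu_i=0$ plays no role.) Second, your specific claim that the positive and negative supports of $\mu$ are forced to be linearly independent is false: for the extremal minimal witness $T=\{\pm e_1,\ldots,\pm e_d\}$, the dependence $\mu=(1,\ldots,1)$ has positive support equal to all of $T$, which consists of $2d$ linearly dependent vectors, while every sign identity $\mu_k A_{kk}=-\sum_{i\ne k}\mu_i A_{ki}$ is satisfied. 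And even if both supports were independent, that would bound $\lvert\operatorname{supp}\mu\rvert$, not $m$. What is actually needed is the lemma that an $m\times m$ matrix with positive diagonal and nonpositive off-diagonal entries has rank at least $m/2$; its known proofs are inductive and must exploit the full kernel of $A$ (of dimension at least $m-d$), not one kernel vector. Your fallback induction on $d$ is, by your own admission, also incomplete at the lifting step, so neither branch closes the proof.
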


The bound of $2d$ is known to be optimal: the $2d$ points $\{\pm e_1, \ldots, \pm e_d\}$, where $e_1, \ldots, e_d$ is the standard basis of $\RN^d$, contain $0$ in their convex hull, but this is not true for any proper subset of these points. With Steinitz's version of Carath\'{e}odory's theorem, we can show how to recognize the interior of a convex hull in \ER.

\begin{corollary}\label{cor:inintconvER}
 Deciding whether a point $q \in \QN^d$ lies in the interior of the convex hull of a semialgebraic set is \ER-complete.
\end{corollary}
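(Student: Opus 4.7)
The plan is to follow the template of Theorem~\ref{thm:inconvER}, using Steinitz's theorem (Theorem~\ref{thm:Sthm}) in place of Carath\'{e}odory's for the upper bound, and a small thickening of that hardness construction for the lower bound.

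For \ER-membership, Steinitz lets us guess at most $2d$ points of $S$, so what remains is to certify ``$q \in \mathrm{int}(\conv(\{p_1, \dots, p_{2d}\}))$'' without a universal quantifier. The key step I would establish as a lemma is this characterization: a point $q$ lies in the interior of $\conv(\{p_1, \dots, p_n\})$ in $\RN^d$ if and only if $\{p_1, \dots, p_n\}$ affinely spans $\RN^d$ \emph{and} $q = \sum_i \lambda_i p_i$ for some $\lambda_i > 0$ with $\sum_i \lambda_i = 1$. The ``if'' direction is a perturbation argument: affine spanning yields $\mu_i \in \RN$ with $\sum_i \mu_i = 0$ and $\sum_i \mu_i p_i = u$ for any direction $u$, so $q + \varepsilon u = \sum_i (\lambda_i + \varepsilon \mu_i) p_i$ remains a valid convex combination for all sufficiently small $|\varepsilon|$. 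The ``only if'' direction uses the centroid $c = \tfrac{1}{n} \sum_i p_i$, whose coefficients are already strictly positive: one extends the segment from $c$ through $q$ slightly past $q$ to some $q' \in \conv(\{p_i\})$ and writes $q = t c + (1-t) q'$ with $t \in (0,1]$; mixing in the uniform weights of $c$ forces every coefficient on $p_i$ to be at least $t/n > 0$. Combining this lemma with Steinitz, interior membership becomes the \ER-statement ``there exist $p_1, \dots, p_{2d} \in S$ and $\lambda_1, \dots, \lambda_{2d} > 0$ with $\sum_i \lambda_i = 1$, $\sum_i \lambda_i p_i = q$, and some $d \times d$ submatrix of $(p_2 - p_1, \dots, p_{2d} - p_1)$ with nonzero determinant.'' If Steinitz returns fewer than $2d$ witnesses we pad with duplicates of $p_1$, redistributing a tiny positive fraction of $\lambda_1$ into the new coefficients.

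For \ER-hardness I would reduce from bounded feasibility by inflating the construction in Theorem~\ref{thm:inconvER}. Given $f \colon [0,1]^d \rightarrow \RN$, set
\[
S' = \{\alpha v + (1-\alpha) x : v \in \{\pm e_1, \dots, \pm e_d\},\ \alpha \in [0,1],\ x \in [0,1]^d,\ f(x) = 0\},
\]
which is semialgebraic once the discrete choice of $v$ is unfolded as a disjunction. If $f$ has no zero in $[0,1]^d$ then $S' = \emptyset$, so $0 \notin \mathrm{int}(\conv(S'))$. Otherwise $\alpha = 1$ places every $\pm e_i$ into $S'$, so $\conv(S')$ contains the cross-polytope $\{x : \|x\|_1 \leq 1\}$, whose interior contains $0$. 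Thus $0 \in \mathrm{int}(\conv(S'))$ iff $f$ is feasible. The main obstacle is the ``only if'' direction of the interior characterization: replacing ``interior'' by an existential statement needs the centroid-ray trick above and cannot be done by direct algebraic manipulation.
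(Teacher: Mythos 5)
Your upper-bound argument is essentially the paper's: the characterization ``$q \in \operatorname{int}\conv\{p_1,\dots,p_{2d}\}$ iff $q$ is a strict convex combination of the $p_i$ and the $p_i$ affinely span $\RN^d$'' is exactly what the paper uses (and your centroid/perturbation proof of it is correct and more detailed than the paper, which asserts it without proof). The gap is in how you certify the spanning condition. ``Some $d\times d$ submatrix of $(p_2-p_1,\dots,p_{2d}-p_1)$ has nonzero determinant'' is a disjunction over $\binom{2d-1}{d}$ column choices, which grows exponentially in $d$; since $d$ is part of the input, the resulting formula is not of polynomial size and the \ER-membership argument breaks. The paper avoids this by dualizing: it expresses the \emph{failure} of spanning as $\exists u \neq 0$ with $(p_i-p_{2d})\cdot u = 0$ for all $i$, and applies Farkas' lemma coordinatewise to turn the negation of this into a conjunction of only $d$ existential, polynomial-size conditions. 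An equally cheap repair within your framework: existentially assert a matrix $M \in \RN^{(2d-1)\times d}$ with $PM = I_d$, where $P$ is the $d\times(2d-1)$ matrix of columns $p_i - p_1$; this certifies full row rank with polynomially many variables and equations.

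Your hardness direction contains a correct observation that the paper glosses over --- the reduction of Theorem~\ref{thm:inconvER} does \emph{not} directly work for the interior (a single zero $x_0$ makes $\conv(S)$ a segment, with empty interior for $d \geq 2$), so some inflation to full dimension is genuinely needed. But your set $S'$ is defined with existential quantifiers over $\alpha$ and $x$: unfolding $v$ removes only the finite disjunction, and what remains is a \emph{projection} of a semialgebraic set. Tarski guarantees $S'$ is semialgebraic, but the problem's input is (as in all the paper's reductions) a quantifier-free formula, and quantifier elimination is not polynomial-time; hardness for a variant whose inputs are existentially quantified formulas does not transfer to the stated problem. The obstruction is that membership of $\pm e_i$ in $S'$ is conditioned on a global existential fact. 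One way to fix this is to condition locally via \emph{invertible} linear images: shift the zero set into $[1,2]^d$, i.e.\ $S = \{x : f(x-\mathbf{1}) = 0 \wedge x \in [1,2]^d\}$, and output $\hat{S} = \bigcup_{i=0}^{d} M_i S$ with $M_i = I + e_ie_i^{T}$ for $i \in [d]$ and $M_0 = -(d+1)I$. Each $M_i S = \{y : \varphi(M_i^{-1}y)\}$ is quantifier-free and polynomial-size; $\sum_{i=0}^d M_i = 0$, so for any $p \in [1,2]^d$ the point $0$ is the uniform (strict) convex combination of $M_0p,\dots,M_dp$, and a matrix-determinant-lemma computation shows these $d+1$ points affinely span $\RN^d$, so by your own lemma $0 \in \operatorname{int}\conv(\hat{S})$ when $f$ is feasible, while $\hat{S} = \emptyset$ otherwise. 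So both halves of your proposal are repairable, but as written each relies on a step (the exponential submatrix disjunction; the projected set) that is not valid in a polynomial-time reduction.
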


Call a convex combination {\em strict} if all the coefficients $\lambda_i$ are positive. Then a point $q$ lies in the interior of the convex hull of $2d$ points $a_1, \ldots a_{2d}$ if it is a strict convex combination of these points, and the affine space spanned by the points has full dimension, that is, they do not lie in a common hyperplane. Formally, we can write this as
\begin{equation}\label{eq:inintconv}
\begin{aligned}
(\exists \lambda_1, &\ldots, \lambda_{2d} \in \RN_{> 0}) \\
        & \sum_{i \in [2d]} \lambda_i a_i = q \wedge \sum_{i \in [2d]} \lambda_i = 1\ \wedge \\
        & \neg (\exists u \in \RN^d)\ u \neq 0 \wedge \bigwedge_{i \in [2d-1]} (a_i-a_{2d}) \cdot u = 0.
\end{aligned}
\end{equation}

The subformula starting with the negated existential quantifier in the last line verifies that the $a_i$ do not lie on a common hyperplane by expressing that there is no non-trivial normal vector $u$. Since $\neg \exists$ is equivalent to $\forall \neg$, we obtain a universal characterization of the hyperplane condition. We can then use linear programming duality in the form of Fredholm's alternative to exchange universal with existential quantification. Let us work out the details. Consider the existential condition we are using for lying on a common hyperplane:
\begin{equation}\label{eq:inhyp}
(\exists u \in \RN^d)\ u \neq 0 \wedge \bigwedge_{i \in [2d-1]} (a_i-a_{2d}) \cdot u = 0.
\end{equation}
We can rewrite $u \neq 0$ as $\bigvee_{j \in [d]} u_j \neq 0$, and, since we can scale $u$, as $\bigvee_{j \in [d]} u_j = 1$. Exchanging the order of the quantifier and the disjunction we get that~\eqref{eq:inhyp} is equivalent to
\begin{equation}\label{eq:inhyp2}
 \bigvee_{j \in [d]} (\exists u \in \RN^d)\  u_j = 1 \wedge \bigwedge_{i \in [2d-1]} (a_i-a_{2d}) \cdot u = 0.
\end{equation}
Define $d$ matrices $A_j = ((a_1-a_{2d})^T \vert \cdots \vert (a_{2d-1}-a_{2d})^T \vert e^T_j)$, for $j \in [d]$, where $e_j$ is the $j$-th vector in the standard basis of $\RN^d$ and $|$ denotes horizontal concatenation (of column vectors). Then~\eqref{eq:inhyp2} can be written as 
\begin{equation}\label{eq:inhyp2.5}
\bigvee_{j \in [d]} (\exists u \in \RN^d)\  A_ju = (0^d \vert 1)^T.
\end{equation}
 Fredholm's alternative, Lemma~\ref{lem:fredholm}, then allows us to conclude that $(\exists u \in \RN^d)\  A_ju = (0^d \vert 1)^T$ fails if and only if
$(\exists v \in \RN^{d+1})\ A_j^Tv = 0 \wedge (0^d \vert 1) \cdot v \neq 0$ succeeds; that last formula simplifies to $(\exists v \in \RN^{d+1})\ A_j^Tv = 0 \wedge v_{d+1} \neq 0$.
This allows us to rewrite~\eqref{eq:inintconv} as
\begin{equation}\label{eq:inintconv2}
\begin{aligned}
(\exists \lambda_1, &\ldots, \lambda_{2d} \in \RN_{> 0}) \\
        & \sum_{i \in [2d]} \lambda_i a_i = q \wedge \sum_{i \in [2d]} \lambda_i = 1\ \wedge \\
        & \bigwedge_{j \in [d]} (\exists v \in \RN^{d+1})\ A_j^Tv = 0 \wedge v_{d+1} \neq 0,
\end{aligned}
\end{equation}
which is purely existential.

\begin{proof}[Proof of Corollary~\ref{cor:inintconvER}]
We follow the proof of Theorem~\ref{thm:inconvER}. To see \ER-hardness we again reduce from the bounded polynomial feasibility problem. Given polynomial $f: \RN^d \rightarrow R$, let
\begin{equation*}
\begin{aligned}
S = & \bigcup_{i \in [d]} \{x+2e_i:\ f(x) = 0, x \in B^d(0,1)\}\ \cup \\
    & \bigcup_{i \in [d]} \{x-2e_i:\ f(x) = 0, x \in B^d(0,1)\}.
\end{aligned}
\end{equation*}

If $f$ has a zero in $B^d(0,1)$, then $S$ contains $0$ in its interior, otherwise $S$ is empty.

For \ER-membership we adapt the proof in 
Theorem~\ref{thm:inconvER}, 
replacing Ca\-ra\-th\'{e}\-o\-do\-ry's Theorem~\ref{thm:Cthm} with Steinitz's Theorem~\ref{thm:Sthm}. We test whether a point lies in the interior of a set of $2d$ points in $\RN^d$ using \eqref{eq:inintconv2}.
\end{proof}

Steinitz's theorem extends to relative interiors, see~\cite[Theorem 10.4]{E93}, so testing membership in relative interiors is likely \ER-complete as well.

\medskip

We can also extend the argument in Theorem~\ref{thm:inconvER} to apply to the positive (or conic) hull of a set. The {\em positive hull} of a set $S \subseteq \RN^d$ is defined as $\operatorname{pos}(S) = \{\sum_{i \in [n]} \lambda_i a_i:\ a_i \in S, \lambda_i \in \RN_{\geq 0}, n \in \NN\}$, the set of all non-negative linear combinations of elements in $S$. Caratheodory's theorem for cones~\cite[Theorem 7.1i]{S86} implies that any element of $\operatorname{pos}(S)$ is a non-negative linear combination of at most $d$ elements of $S$, using $S \subseteq \RN^d$. Consequently, any element of the interior of $\operatorname{pos}(S)$ is a strict combination of at most $d$ elements of $S$.

\begin{corollary}\label{cor:inposER}
 Deciding whether a point $q \in \QN^d$ lies in the (interior of the) positive hull of a semialgebraic set is \ER-complete.
\end{corollary}

\section{Kirchberger and Hyperplane Separations}\label{sec:Kirch}

We say two sets $A, B \in \RN^d$ are {\em (strictly) separated by a hyperplane} $H$ if there is a $v \in \RN^d\setminus \{0\}$ and $c \in \RN$ such that
$v \cdot a \leq c$ and $v \cdot b \geq c$ ($v \cdot a < c$ and $v \cdot b > c$) for all $a \in A, b \in B$, where $\cdot$ is the dot-product of two vectors. We will treat strict separability first and then point out the necessary modifications for separability.

For semialgebraic sets $A = \{x \in \RN^d:\ \alpha(x)\}$ and $B = \{x \in \RN^d:\ \beta(x)\}$ we can express being strictly separable by a hyperplane as
\[(\exists v \in \RN^d)(\exists c \in \RN)(\forall a \in A)(\forall b \in B)\ v \neq 0 \wedge v\cdot a < c \wedge v \cdot b > c,\]
in other words, the problem lies in \EVR. Enter a theorem proved by Paul Kirchberger, a student of David Hilbert.\footnote{The little we know about Kirchberger can be found in~\cite[Chapter 4]{S06}.}

\begin{theorem}[Kirchberger~\cite{K03}]\label{thm:KBthm}
 If $A$ and $B$ are compact sets in $\RN^d$ then $A$ and $B$ can be strictly separated by a hyperplane if and only if $A \cap P$ and $B \cap P$ can be strictly separated by a hyperplane for every set $P$ of $d+2$ points in $A \cup B$.
\end{theorem}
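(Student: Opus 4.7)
The forward direction is immediate: any hyperplane strictly separating $A$ from $B$ also separates $A \cap P$ from $B \cap P$ for every finite $P \subseteq A \cup B$. For the converse I would argue the contrapositive via a lifting trick. Suppose $A, B$ cannot be strictly separated. Since $A$ and $B$ are compact, so are $\conv(A)$ and $\conv(B)$, and the classical separation theorem for disjoint compact convex sets forces $\conv(A) \cap \conv(B) \neq \emptyset$; fix a common point $p$.

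The plan is to certify this intersection using only $d+2$ defining points. To each $a \in A$ associate the lifted point $(a, 1) \in \RN^{d+1}$ and to each $b \in B$ the lifted point $(-b, -1) \in \RN^{d+1}$. Writing $p = \sum_i \alpha_i a_i = \sum_j \beta_j b_j$ as convex combinations and rescaling by $1/2$ expresses the origin of $\RN^{d+1}$ as a convex combination of the lifts of $A \cup B$. Carath\'{e}odory's Theorem~\ref{thm:Cthm} applied in $\RN^{d+1}$ then shrinks this to a convex combination of at most $d+2$ lifted points; unlifting (and padding with arbitrary extra points from $A \cup B$ if fewer than $d+2$ were selected) produces a set $P \subseteq A \cup B$ with $|P| = d+2$ satisfying $\conv(A \cap P) \cap \conv(B \cap P) \neq \emptyset$, so $A \cap P$ and $B \cap P$ admit no strict hyperplane separation.

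The main subtlety I anticipate is bridging the gap between the hypothesis \emph{strictly separated} for the full sets $A, B$ and the weaker conclusion \emph{separated} that the theorem attributes to the finite subsets $A \cap P$ and $B \cap P$: if the two finite convex hulls happened to meet only along their boundaries, a non-strict hyperplane separation could still exist, and the intended contradiction would collapse. To close this gap I would insist on a Radon-minimal (irredundant) Carath\'{e}odory representation of the origin in $\RN^{d+1}$, so that the two resulting point clusters give rise to convex hulls whose \emph{relative interiors} share the witness $p$; polytopes with overlapping relative interiors admit no hyperplane separation whatsoever, even a weak one. This minimality/general-position reduction is where I expect the technical care to concentrate, while the lifting itself and the invocation of Carath\'{e}odory are routine.
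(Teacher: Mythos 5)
First, note that the paper does not prove Theorem~\ref{thm:KBthm} at all: it cites Valentine~\cite[Theorem 6.21]{V64} for the proof, so your argument can only be judged on its own merits. Your skeleton is sound and is a genuinely different route from the Helly-type argument in Valentine: if no strict separation exists, compactness forces $\conv(A) \cap \conv(B) \neq \emptyset$; the lift $a \mapsto (a,1)$, $b \mapsto (-b,-1)$ writes the origin of $\RN^{d+1}$ as a convex combination of lifted points; Carath\'{e}odory in $\RN^{d+1}$ cuts the support to at most $d+2$ points, and the last coordinate forces total weight $\tfrac12$ on each cluster, so both clusters are nonempty and their hulls meet. This cleanly proves the \emph{standard} form of Kirchberger, in which the finite subsets are required to be \emph{strictly} separated: intersecting hulls already rule out strict separation, and padding to exactly $d+2$ points is harmless there because hulls only grow.

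The genuine gap is in your patch for the weak-separation conclusion. The claim that ``polytopes with overlapping relative interiors admit no hyperplane separation whatsoever, even a weak one'' is false under the paper's definition of separation: overlapping relative interiors only forbid \emph{proper} separation, while if both clusters lie in a common hyperplane $H$, then $H$ separates them in the stated sense ($v \cdot a \leq c \leq v \cdot b$ holds with equality throughout). You cannot exclude this degeneracy: the at most $d+2$ unlifted points may well be affinely dependent, and, worse, padding up to exactly $d+2$ points can destroy relative-interior overlap (a new vertex shifts the relative interior of the hull), so the Radon-minimality you invoke does not survive the padding step you also need. In fact, with the paper's literal definitions the stated converse is false, so no proof can close this gap: take $d = 2$, $A = [0,1] \times \{0\}$, $B = \{(1/2, 0)\}$; every set of four points of $A \cup B$ is collinear and hence weakly separated by the $x$-axis, yet $A$ and $B$ cannot be strictly separated since $B \subseteq \conv(A)$. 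The correct reading (and the standard statement, as in Valentine) has strict separation on the finite-subset side as well, and for that version your lifting argument closes directly, with the relint/minimality machinery --- which is where you expected the technical care to concentrate --- being both unnecessary and, as stated, incorrect.
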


A proof of Kirchberger's theorem can be found in Valentine's book~\cite[Theorem 6.21]{V64}.

\begin{theorem}\label{thm:strictsepVR}
  Deciding whether two compact semialgebraic sets $A$ and $B$ in $\RN^d$ can be strictly separated by a hyperplane is \VR-complete.
\end{theorem}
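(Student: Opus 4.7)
The plan is to prove Theorem~\ref{thm:strictsepVR} in two parts: $\VR$-membership via Kirchberger's Theorem~\ref{thm:KBthm}, and $\VR$-hardness via the complement of convex-hull membership from Theorem~\ref{thm:inconvER}.

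For the upper bound, Kirchberger's theorem flips the natural $\EVR$-description of strict separability into a universal statement over $(d+2)$-point witnesses: $A$ and $B$ are strictly separable iff for every $p_1, \ldots, p_{d+2} \in A \cup B$ the labeled subsets $A \cap \{p_1, \ldots, p_{d+2}\}$ and $B \cap \{p_1, \ldots, p_{d+2}\}$ are separable. The crucial observation is that for a finite set of points, separability by a hyperplane is equivalent to the convex hulls of the two colour classes being disjoint, which is itself a $\forall$-condition: no convex combination from the $A$-side can coincide with one from the $B$-side. Writing $A = \{x : \alpha(x)\}$ and $B = \{x : \beta(x)\}$ with $\alpha, \beta$ quantifier-free, the whole condition then becomes the following $\VR$-formula:
\begin{align*}
(\forall p_1, &\ldots, p_{d+2} \in \RN^d)(\forall \lambda_1, \ldots, \lambda_{d+2}, \mu_1, \ldots, \mu_{d+2} \geq 0) \\
& \Bigl[ \bigwedge_i (\alpha(p_i) \vee \beta(p_i)) \wedge {\textstyle\sum_i \lambda_i = 1} \wedge {\textstyle\sum_i \mu_i = 1} \wedge {} \\
& \quad \bigwedge_i (\lambda_i = 0 \vee \alpha(p_i)) \wedge \bigwedge_i (\mu_i = 0 \vee \beta(p_i)) \Bigr] \\
& \longrightarrow {\textstyle\sum_i \lambda_i p_i \neq \sum_i \mu_i p_i}.
\end{align*}

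For the lower bound, I would reduce from the complement of compact convex-hull membership, which is $\VR$-complete since the hardness reduction behind Theorem~\ref{thm:inconvER} already produces a compact semialgebraic $S$. Given $q \in \QN^d$ and such an $S$, set $A := \{q\}$ and $B := S$. The claim is that $\{q\}$ and $S$ are strictly separable iff $q \notin \conv(S)$. For $(\Leftarrow)$, $\conv(S)$ is compact (as the image of a compact set under bounded convex combinations in $\RN^d$, by Carath\'{e}odory), so strict Hahn--Banach produces a hyperplane strictly separating $\{q\}$ from $\conv(S) \supseteq S$. For $(\Rightarrow)$, a convex combination $q = \sum_i \lambda_i s_i$ with $s_i \in S$ together with a strictly separating pair $(v, c)$ would force $c > v \cdot q = \sum_i \lambda_i (v \cdot s_i) > c$, a contradiction.

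The main technical point is setting up the inner $\forall$-formula correctly: in particular, a point lying in $A \cap B$ automatically precludes strict separation, and the implications $\lambda_i = 0 \vee \alpha(p_i)$ and $\mu_i = 0 \vee \beta(p_i)$ must encode this faithfully --- such a point can serve as both a $\lambda$- and $\mu$-witness of itself, causing $\sum_i \lambda_i p_i = \sum_i \mu_i p_i$ and falsifying the implication as desired. A secondary subtlety worth noting is that Theorem~\ref{thm:KBthm} as quoted uses "separated" on the right; matching strict separability on the left requires reading the finite witness condition as disjoint convex hulls (the standard form of Kirchberger), which is exactly the statement encoded in the formula above.
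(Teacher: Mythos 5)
Your proof is correct, and it splits into a half that matches the paper and a half that genuinely differs. The membership argument is essentially the paper's: Kirchberger's Theorem~\ref{thm:KBthm} plus the observation that for finite point sets separability reduces to disjointness of the two convex hulls, which is itself a purely universal condition; your encoding differs only in bookkeeping. You quantify a single block of $d+2$ points of $A \cup B$ with selector guards $\lambda_i = 0 \vee \alpha(p_i)$ and $\mu_i = 0 \vee \beta(p_i)$, whereas the paper quantifies $d+2$ points from $A$ and $d+2$ points from $B$ separately; both formulations are equivalent to strict separability, and yours is arguably the more literal rendering of the theorem (it also makes the $A \cap B$ degeneracy explicit, as you note, and your closing remark correctly resolves the strict-versus-non-strict mismatch in the quoted statement, which the paper handles the same way via disjoint hulls of compact convex sets). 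The hardness half takes a different route: the paper reduces directly from bounded feasibility, setting $A = [-1,1]^d$ and $B = \{x : f(x) = 0\} \cup [2,3]^d$, so that a zero of $f$ makes $A$ and $B$ intersect while infeasibility leaves two disjoint convex boxes; you instead route through the complement of convex-hull membership, strictly separating the single point $q$ from the compact set $S$ produced in Theorem~\ref{thm:inconvER}'s reduction, with a correct compactness-plus-separating-hyperplane argument for one direction and a correct convex-combination contradiction for the other. Your route is economical---it reuses an established construction and shows hardness already for point-versus-compact-set separation, a slightly stronger statement---but note two small points it inherits or introduces: compactness of the constructed set rests on the bounded-feasibility promise, so one should explicitly intersect the zero set with the bounding box (the paper's own reductions carry the same implicit step); and when $f$ is infeasible your $B = S$ is empty, so separability holds only vacuously under the paper's definition, whereas the paper's padding of $B$ with the far-away box $[2,3]^d$ sidesteps reliance on that convention.
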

\begin{proof}
 To see \VR-hardness we reduce from bounded polynomial feasibility, which is \ER-complete (see the proof of Theorem~\ref{thm:inconvER}). So we are given a polynomial $f: \RN^d \rightarrow \RN$ and are asked whether there is an $x \in B^d(0,1)$ such that $f(x) = 0$. Define $A = B^d(0,1)$ and $Z = \{x:\ f(x) = 0 \wedge x \in B^d(0,1)\} \cup B^d(p,1)$, where $p \in \QN^d$ is a point of distance strictly greater than $2$ from the origin. If $f$ is feasible, then $Z \cap A \neq \emptyset$ so $A$ and $Z$ are not separable (let alone strictly separable), otherwise, $A = B^d(0,1)$ and $Z = B^d(p,1)$ are disjoint convex sets which are strictly separated by a hyperplane.

 To see membership in \VR, suppose we are given two compact semialgebraic sets $A = \{x \in \RN^d:\ \alpha(x)\}$ and $B = \{x \in \RN^d:\ \beta(x)\}$. It follows from Kirchberger's theorem~\ref{thm:KBthm} that $A$ and $B$ are strictly separable by a hyperplane if and only if every $d+2$ points in $A$ and $d+2$ points in $B$ are sstrictly separated by a hyperplane. This allows us to  express that $A$ and $B$ are strictly separable by a hyperplane as
 \begin{align*}
 (\forall a_1, & \ldots, a_{d+2})(\forall b_1, \ldots b_{d+2})\ \\
              & \left(\bigwedge_{i \in [d+2]} \alpha(a_i) \wedge \beta(b_i)\right) \rightarrow \conv(a_1, \ldots, a_{d+2}) \cap \conv(b_1, \ldots, b_{d+2}) = \emptyset,
 \end{align*}
 where $\conv(P)$ denotes the convex hull of the points in $P$ and we use the fact that two compact convex sets are strictly separable if and only if they are disjoint~\cite[Theorem 2.10]{V64}. Since $\conv(a_1, \ldots, a_{d+2}) \cap \conv(b_1, \ldots, b_{d+2}) = \emptyset$ can be written
 as
  \begin{align*}
 (\forall \lambda_1, & \ldots, \lambda_{d+2})(\forall \mu_1, \ldots, \mu_{d+2} \in \RN_{\geq 0}) \\
         & \biggl(\sum_{i \in [d+2]} \lambda_i = 1 = \sum_{i \in [d+2]} \mu_i\biggl) \rightarrow \sum_{i \in [d+2]} \lambda_i a_i \neq \sum_{i \in [d+2]} \lambda_i b_i,
  \end{align*}
 strict separability by hyperplanes lies in \VR.
\end{proof}

Theorem~\ref{thm:strictsepVR} can easily be adapted to separability:

\begin{corollary}\label{cor:sepVR}
    Deciding whether two compact semialgebraic sets $A$ and $B$ in $\RN^d$ can be separated by a hyperplane is \VR-complete.
\end{corollary}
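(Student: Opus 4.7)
My plan is to adapt the two halves of the proof of Theorem~\ref{thm:strictsepVR} separately, since each requires a small additional ingredient once strictness is dropped.

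For $\VR$-hardness, I would reduce from bounded feasibility, but modify the reduction used for Theorem~\ref{thm:strictsepVR} so that any feasibility witness lies in $\text{int}(A)$ rather than merely in $A \cap B$. A single boundary intersection point would not prevent weak separation, because a supporting hyperplane there is still a legitimate separator. Explicitly: given $f : [-1,1]^d \to \RN$, take $A = [-2,2]^d$ and $B = \{x \in [-1,1]^d : f(x) = 0\} \cup [3,4]^d$. If $f$ is infeasible, $B = [3,4]^d$ is separable from $A$ by the hyperplane $x_1 = 5/2$. If $x_0 \in [-1,1]^d$ is a zero of $f$, then $x_0 \in \text{int}(A) \cap B$, and any candidate hyperplane $v \cdot x = c$ weakly separating $A$ and $B$ would force $v \cdot x_0 = c$; since $x_0$ is interior to $A$ we could move a little from $x_0$ in direction $v$ while staying in $A$, producing a point with $v \cdot a > c$, a contradiction.

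For $\VR$ membership, rather than invoking Kirchberger (whose clean form is tailored to strict separation), I would rephrase weak separability in a form that marries well with Steinitz's theorem. A short computation with the supporting hyperplane theorem shows that compact $A, B$ are weakly separable iff $0 \notin \text{int}(\conv(A - B))$, where $\conv(A - B) = \conv(A) - \conv(B)$. Applying Theorem~\ref{thm:Sthm} to the compact set $A - B \subseteq \RN^d$, this condition becomes
\[(\forall a_1,\ldots,a_{2d} \in A)(\forall b_1,\ldots,b_{2d} \in B)\ 0 \notin \text{int}(\conv(a_1 - b_1, \ldots, a_{2d} - b_{2d})).\]
The inner statement is the negation of the purely existential formula \eqref{eq:inintconv2} derived in the proof of Corollary~\ref{cor:inintconvER}, so it is purely universal, and the whole statement lies in $\VR$.

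The main obstacle I foresee is the hardness step: one must notice that the strict-case reduction does not transfer verbatim --- a zero of $f$ on $\partial A$ would still allow weak separation by the supporting hyperplane there --- and patch it by enlarging $A$ so that every possible witness lands in its interior. The membership direction is essentially bookkeeping once one has adopted the characterization $0 \notin \text{int}(\conv(A - B))$ and reused the Farkas-based duality computation already done for Corollary~\ref{cor:inintconvER}.
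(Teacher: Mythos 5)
Your proof is correct, but it departs from the paper's on both halves, and in the membership half genuinely so. The paper proves membership by staying with Kirchberger: it invokes the non-strict variant of the theorem~\cite[Theorem 6.22]{V64}, which requires $2d+2$ points per set rather than $d+2$, and replaces disjointness of the two convex hulls by disjointness of their \emph{interiors}~\cite[Theorem 2.9]{V64}, enforced by restricting the coefficients $\lambda_i,\mu_i$ to $\RN_{>0}$. Your route---weak separability iff $0 \notin \operatorname{int}\bigl(\conv(A-B)\bigr)$, then Steinitz (Theorem~\ref{thm:Sthm}) applied to the compact set $A-B$, then the negation of the already-dualized existential formula~\eqref{eq:inintconv2}---is sound (the supporting-hyperplane computation and the identity $\conv(A-B)=\conv(A)-\conv(B)$ both check out, including the degenerate case where the separator meets both sets) and has the merit of recycling the Section~2 machinery instead of citing a second Kirchberger-type theorem; the paper's version, in exchange, is a one-line modification of the strict-separation proof. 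On hardness, the paper simply declares that the proof of Theorem~\ref{thm:strictsepVR} ``remains the same,'' whereas you enlarge $A$ to $[-2,2]^d$ so that every zero of $f$ lies in $\operatorname{int}(A)$. Your caution is justified: the step in the proof of Theorem~\ref{thm:strictsepVR} asserting that $A \cap B \neq \emptyset$ rules out separability is true only for \emph{strict} separation, and a zero of $f$ on $\partial([-1,1]^d)$ really does break the verbatim reuse---e.g.\ for $f(x) = \sum_{i}(x_i-1)^2$, whose sole zero is the corner $(1,\ldots,1)$, the sets $A=[-1,1]^d$ and $B=\{(1,\ldots,1)\}\cup[2,3]^d$ are weakly separated by the supporting hyperplane $\sum_i x_i = d$ even though $f$ is feasible. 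So the unmodified reduction is sound only under a promise that zeros lie in the open cube (arrangeable by rescaling the instance from~\cite[Lemma 3.9]{S13}), and your interior-witness patch, with its verification that a separator through an interior point of $A$ is impossible, eliminates the issue cleanly.
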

\begin{proof}
The hardness proof remains the same. For membership we need to make two changes: $(1)$ the version of
Kirchberger's theorem we used is stated for strict separation, but there is a version for non-strict separation, see~\cite[Theorem 6.22]{V64}, this version requires $2d+2$ points $P$ (rather than $d+2$ points) so that $A \cap P$ and $B \cap B$ are {\em separated}, rather than strictly separated, by a hyperplane; $(2)$ since we only require separation, rather than strict separation, we need to replace the condition that  $\conv(a_1, \ldots, a_{d+2})$ and $\conv(b_1, \ldots, b_{d+2})$ are disjoint with their interiors being disjoint~\cite[Theorem 2.9]{V64}. To compare the interiors only, we require the $\lambda_i$ and $\mu_i$ to lie in $\RN_{> 0}$ rather than $\RN_{\geq 0}$.
\end{proof}

There are results in the style of Kirchberger for other geometric figures such as hyperspheres, hypercylinders, etc.\ many due to Lay, see~\cite[Section 11]{E93}; we would expect that these lead to \VR-complete separation problems as well.

\section{\Krass\ and Star-Shapedness}\label{sec:Krass}

A set $S$ in $\RN^d$ is {\em star-shaped} if there is a point $q \in S$ which sees every point of $S$; here, two points {\em see} each other if the line-segment connecting them belongs to $S$. By definition, star-shapedness of a semialgebraic set lies in \EVR, since it can be expressed as
\[(\exists q \in \RN^d)(\forall p\in \RN^d)(\forall \lambda \in [0,1]\ p \in S \rightarrow \lambda p + (1-\lambda) q \in S.\]

The alternative characterization of star-shapedness we will work with was published by \Krass\ in 1946 in Russian~\cite{K46}.

\begin{theorem}[\Krass]\label{thm:Kthm}
A non-empty compact set $S \subseteq \RN^d$ is star-shaped if and only if every $d+1$ points in $S$ can see a common point in $S$.
\end{theorem}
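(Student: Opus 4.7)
The plan is to prove the nontrivial direction (that common visibility of every $(d+1)$-tuple forces star-shapedness) via Helly's theorem applied to visibility sets. The other direction is immediate, since a kernel point of a star-shaped $S$ is a single witness seen by every sub-collection, in particular by every $(d+1)$-tuple.

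For each $p \in S$, write $V(p) = \{q \in S : [p,q] \subseteq S\}$ for the set of points in $S$ that see $p$. The kernel is precisely $K(S) = \bigcap_{p \in S} V(p)$, so the task reduces to showing this intersection is non-empty. A standard limit argument (using closedness of $S$) shows that $V(p)$ is closed in $S$, so both $V(p)$ and $\conv V(p)$ are compact, and $\conv V(p)$ is convex. The hypothesis says that every $(d+1)$-fold intersection $V(p_1) \cap \cdots \cap V(p_{d+1})$ is non-empty, hence the same holds for the corresponding convex hulls $\conv V(p_1), \ldots, \conv V(p_{d+1})$. The infinite-family form of Helly's theorem for compact convex sets in $\RN^d$ then delivers a common point $c \in \bigcap_{p \in S} \conv V(p)$.

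The main obstacle is upgrading $c \in \bigcap_{p \in S} \conv V(p)$ to $c \in \bigcap_{p \in S} V(p)$, since $V(p)$ is generally not convex. My plan is to restrict attention to boundary points $p \in \partial S$ and exploit a local support property: at each $p \in \partial S$ one picks an inner closed half-space $H_p^-$ containing $V(p)$, because any segment from $p$ into $S$ must initially head into $S$. This forces $c \in H_p^-$ for every $p \in \partial S$. Were $c$ to fail to see some $p_0 \in S$, then travelling along $[c, p_0]$ one would locate a first boundary exit point $p^*$ whose local half-space $H_{p^*}^-$ separates $c$ from points on $[c,p_0]$ just past $p^*$, contradicting the previous conclusion. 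Making this local-support argument rigorous at possibly non-smooth boundary points, where a single supporting hyperplane need not exist and one may have to work with cones of local support or approximate $S$ by smoother sets, is where the proof needs the most care; the Helly step itself is mechanical once the visibility sets are in hand.
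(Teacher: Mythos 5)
Your outline follows the classical skeleton (easy direction, Helly on convex hulls of visibility sets, then a support argument to upgrade $c \in \bigcap_p \conv V(p)$ to a kernel point), and the Helly half is sound: $V(p)$ is closed in the compact set $S$, so $\conv V(p)$ is compact and convex, and the infinite-family Helly argument goes through. But the step you yourself flag as delicate is not merely delicate --- as stated it is broken, for two distinct reasons. First, the claimed half-space $H_p^-$ with $V(p) \subseteq H_p^-$ simply does not exist at every boundary point: at a reflex corner of an L-shaped polygon in $\RN^2$, the directions from $p$ to visible points positively span the plane, so no closed half-space with $p$ on its boundary contains $V(p)$. ``Any segment from $p$ into $S$ must initially head into $S$'' gives you a cone of feasible directions, but at non-convex boundary points that cone is larger than a half-space. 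Second, and more fatally, even where $H_{p^*}^-$ exists your contradiction does not materialize: if $p^*$ is the \emph{first} exit point of $[c,p_0]$ from $S$, then $[c,p^*] \subseteq S$, so $c$ \emph{sees} $p^*$, hence $c \in V(p^*) \subseteq H_{p^*}^-$ --- perfectly consistent with everything you have derived. The visibility failure occurs beyond $p^*$, and the point on the segment where it occurs carries no usable support information.

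The missing idea is the ball-sliding construction (the paper's Lemma~\ref{lem:Klemma}, following Valentine): since $[c,p_0] \nsubseteq S$ and $S$ is compact, there is a closed ball centered on the segment and disjoint from $S$; slide its center along the segment toward $p_0$ until the closed ball first touches $S$, and let $z$ be a contact point. This $z$ is a \emph{spherical} point --- generally \emph{not} on the segment --- and because $z$ lies on the half of the ball facing $p_0$, the supporting half-space $H$ at $z$ satisfies $S_z \subseteq H$ and $c \notin H$. That is the contradiction, since $\conv V(z) \subseteq H$ forces $c \in H$. So the correct repair of your plan is to run the support argument only at spherical points, with Lemma~\ref{lem:Klemma} guaranteeing that spherical points suffice to witness any visibility failure; this is exactly how the paper proceeds (Theorem~\ref{thm:ourKthm}), where Helly is applied to supporting half-spaces intersected with a bounding box to restore compactness --- a compactness issue your $\conv V(p)$ formulation handles more cleanly, which is the one genuine improvement in your write-up.
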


Breen~\cite{B79} includes two examples justifying the compactness assumption in \Krass's theorem: an open disk with its center pointer removed shows that the set has to be closed, and---an example he takes from Hare and Kenelly~\cite{HK68}---the set $S = \cup_{n} T_n$, where $T_n =  \{(x,y):\ n-1 \leq y \leq n, x+y \geq n\}$, shows that the set has to  be bounded, see Figure~\ref{fig:teeth}.
$S$ is not star-shaped even though for every finite subset of $S$ there is a point (even a disk) inside $T_1$ that can see all points in the set. As Hare and Kenelly observe, $S$ does not contain a maximal star-shaped subset.

\begin{figure}[htbp]
    \centering
    \begin{pspicture}(-1,-1)(5,5)
        \pspolygon[fillstyle=solid,fillcolor=lightgray,linecolor=white](0,0)(4,0)(4,4)(0,4)
        \pspolygon[fillstyle=solid,fillcolor=white]%
        (0,0)(1,0)(0,1)(1,1)(0,2)(1,2)(0,3)(1,3)(0,4)
        \psaxes[showorigin=false]{->}(0,0)(-0.5,-0.5)(4.5,4.5)[$x$,0][$y$, 90]
        \put(2.3,2.4){$S$}
    \end{pspicture}
    \caption{A closed set $S$ in $\RN^2$ in which every three points can see a common point, but $S$ is not star-shaped: every point in $S$ can see at most finitely many points on the $y$-axis. Example due to Hare and Kenelly~\cite{HK68}.}
    \label{fig:teeth}
\end{figure}

Using \Krass's theorem we can express the star-shapedness of a non-empty compact set as
\[(\forall p_1, \ldots, p_{d+1})(\exists q)\ \bigwedge_{i \in [d+1]} p_i \in S \rightarrow  \bigwedge_{i \in [d+1]} p_iq \subseteq S.\]
This only gives us a $\VEVR$-characterization of star-shapedness, since the conditions $p_iq \subseteq S$ hide a universal quantifier. Indeed, we do not know how to improve on this for general semialgebraic sets $S$, instead we will work with a more restricted type of semialgebraic set which we will introduce in Section~\ref{sec:starER}; before that, in Section~\ref{sec:ourKrass} we prove the version of \Krass's theorem we need for our result.

\subsection{\Krass's Theorem}\label{sec:ourKrass}

The material in this section closely follows the proof of \Krass's theorem as presented in Valentine's book on convex sets~\cite[Section 6]{V64}.

Let $S_x = \{y:\ xy \subseteq S\}$, the set of points in $S$ visible from $x$, known as the {\em $x$-star of $S$}. We call a point $x \in S$ {\em spherical} if there is an open ball $B$ disjoint from $S$ and such that $x$ lies on the boundary of $B$ (we take this terminology from Cel~\cite{C91}; Valentine's presentation works with a different type of points called regular). Note that a spherical point has to lie on $\partial S$. For a spherical point $x$ and such an open ball $B$, let $H$ be the closed half-space disjoint from $B$ which is tangential to $B$ at $x$; we call $H$ a {\em supporting half-space} at $x$. We have $S_x \subseteq H$; the reason is that any line-segment ending in $x$ must intersect $B$ or $H$. Let ${\cal H}$ be the set of all supporting half-spaces of $S$.

\begin{lemma}[\Krass]\label{lem:Klemma}
 Suppose $S \subseteq \RN^d$ is a closed set with $y \in S$, and $x \in \RN^d$ such that $xy \nsubseteq S$. Then there is a spherical point $z \in S$ such that $x \notin H$, where $H$ is a supporting half-space at $z$. In particular, $x \notin S_z$.
\end{lemma}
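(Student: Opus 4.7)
The plan is to construct explicitly an open ball $B \subseteq \RN^d \setminus S$ tangent to $S$ at a spherical point $z$, arranged so that $x$ lies in $\overline{B} \setminus \{z\}$; since the closed ball minus the tangent point sits inside the open halfspace on the ball-side of the tangent hyperplane at $z$, such a $B$ immediately forces $x \notin H$. Thus the whole task reduces to exhibiting a ball with the right geometry.

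Since $xy \nsubseteq S$, there is a point $p \in xy$ with $p \notin S$, and $d(p,S)>0$ by closedness of $S$. First I would consider the one-parameter family of open balls $B_s = B(c_s, s)$ for $s \geq 0$, where $c_s = x + s\hat u$ with $\hat u = (y-x)/\|y-x\|$; these balls all have $x$ on their boundary, are nested in $s$ (a short triangle-inequality check shows $B_{s_1}\subseteq B_{s_2}$ when $s_1 \leq s_2$), and contain $y$ once $s$ exceeds $\|xy\|/2$. Setting $s^* = \sup\{s : B_s \cap S = \emptyset\}$, I would argue $0 < s^* < \infty$: finiteness because $y \in B_s \cap S$ once $s$ is large, and positivity because a sufficiently small $B_s$ fits inside a neighborhood of $p$ that misses $S$ (assuming $p$ well-placed).

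Next, by continuity $B_{s^*}$ remains disjoint from $S$, while a compactness/limit argument applied to points $z_n \in B_{s_n} \cap S$ for $s_n \downarrow s^*$ produces $z \in \overline{B_{s^*}} \cap S$, necessarily on $\partial B_{s^*}$; thus $B_{s^*}$ witnesses that $z$ is spherical. Since $x \in \partial B_{s^*}$ and $x \neq z$ (which needs a separate verification, easy when $x \notin S$ since $z \in S$), we get $x \in \overline{B_{s^*}} \setminus \{z\}$, so $x$ lies in the open halfspace strictly opposite $H$, giving $x \notin H$ as required.

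The hard part will be the case $x \in S$, in particular $x$ in the interior of $S$: then for small $s$ the ball $B_s$ is a small one-sided neighborhood of $x$ in the $\hat u$-direction that already lies inside $S$, so $s^*$ as defined can be zero and the construction collapses. To handle this I would anchor the family at $p$ rather than $x$ (centers on the ray from $p$ toward $y$, with $p$ on the boundary of each ball), obtain a spherical $z$ with $p \notin H$, and then transfer the conclusion to $x$ by exploiting the collinearity of $x$, $p$, and $c_{s^*}$; a direct dot-product computation against the outward normal $z - c_{s^*}$ shows that, provided $p$ is taken sufficiently close to $x$ on the segment, $x$ still lies strictly on the ball-side of the tangent hyperplane at $z$. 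Making this ``sufficiently close'' choice work uniformly in the position of $x$ relative to $S$ is the most delicate point of the proof.
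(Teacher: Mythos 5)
Your first construction (the balls $B_s$ internally tangent at $x$, for the case $x \notin S$) is correct: the strict inequality $(x-z)\cdot(z-c_{s^*}) < 0$ for $x \in \overline{B_{s^*}} \setminus \{z\}$ follows from Cauchy--Schwarz, and $z \neq x$ holds since $z \in S$, $x \notin S$. (Minor slip: positivity of $s^*$ in that branch uses a small neighborhood of $x$, not of $p$.) But the fallback for $x \in S$ --- which is the case the lemma actually needs, since in the application the Helly point is not known to lie outside $S$ --- has a genuine gap, exactly where you flagged delicacy, and your proposed repair cannot work. First, when $x$ is interior to $S$ you cannot take $p$ ``sufficiently close to $x$'': all segment points near $x$ lie in $S$, so $a = \norm{x-p}$ is bounded below. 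Second, for the family anchored at $p$ (centers $c_s = p + s\hat{u}$, radius $s$), the first-contact point $z$ is only constrained to the open halfspace $\{w : (w-p)\cdot \hat{u} > 0\}$; it can sit in the \emph{back} hemisphere of the contact ball. Writing $\beta = -\hat{u}\cdot(z - c_{s})$, your transfer needs $(x-z)\cdot(z-c_s) = (s+a)\beta - s^2 < 0$, i.e.\ $\beta < s^2/(s+a)$, but $\beta$ can be arbitrarily close to $s$. Concretely, in the plane take $S = \overline{B}((0,0),1) \cup \{(1.2,\,0.5)\} \cup \{(10,0)\}$, $x = (0,0)$, $y = (10,0)$, and $p = (1.05, 0)$ (the admissible $p$ nearest the exit point, per your prescription). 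The anchored family first touches $S$ at $z = (1.2, 0.5)$ with $s \approx 0.908$, $c \approx (1.958, 0)$, and $(x-z)\cdot(z-c) \approx 0.66 > 0$, so $x \in H$ for the halfspace your construction produces. The lemma is still true here (e.g.\ $z = y$ with its supporting halfspace works), but your construction does not find such a point.

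The structural fix is the paper's: slide a ball of \emph{fixed} radius $r$ (with closure initially disjoint from $S$, center on $xy$ at $p$) toward $y$ until its closure first meets $S$ at a point $z$. For a translating fixed-radius ball, any first-contact point must lie in the closed \emph{front} hemisphere: if $\hat{u}\cdot(z - c_{t^*}) < 0$, then $\frac{d}{dt}\norm{z - c_t}^2 = -2\hat{u}\cdot(z-c_t) > 0$ at $t^*$, so $z$ was strictly inside an earlier ball, a contradiction. Then for $x = c_{t^*} - b\hat{u}$ with $b > 0$ one gets $(x-z)\cdot(z - c_{t^*}) = -b\,\hat{u}\cdot(z - c_{t^*}) - r^2 \leq -r^2 < 0$, hence $x \notin H$, with no case distinction on whether $x \in S$. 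It is the fixed radius (translation rather than dilation) that pins the contact point to the front hemisphere; your anchored, growing family lacks precisely this, and that is what breaks the transfer of the conclusion from $p$ to $x$.
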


Valentine's argument~\cite[Lemma 6.2]{V64} establishes our version of the lemma, but we include a sketch of the proof. See Figure~\ref{fig:Krasslemma} for an illustration.

\begin{figure}[htbp]
    \centering
   \begin{pspicture*}(-0.5,0.5)(6,5)

        \put(1.318,0){
        \rput{-28}{
        \pspolygon[fillstyle=solid,fillcolor=lightgray,linecolor=white](0,-1)(-7,-1)(-7,10.5)(0,10.5)
\psline[linewidth=0.5pt](-0.025,-1)(-0.025,10.5)
}}

        \put(3.2,4.5){$H$}

        \put(-0.26,0.25){
        \pscurve(0,4)(0.5,4.5)(1,4)(3.5,3.5)(2,3)(4,2)(3,1)}
        \pscircle[linewidth=0.5pt](3.5,3.5){0.3}
        \put(3.8,3){$B$}
        \cnode*(3.22,3.65){.05}{z}\nput{150}{z}{$z$}
        \cnode*(0.5,2){.05}{y}\nput{135}{y}{$y$}
        \cnode*(5,4.25){.05}{x}\nput{135}{x}{$x$}
        \cnode*(3.5,3.5){.02}{c}
        \ncline[linewidth=0.5pt,linestyle=dashed]{x}{y}
    \end{pspicture*}
    \caption{Moving a ball $B$ along $xy$ until it touches the boundary of $S$ in the spherical point $z$. In the drawing we have $x \not\in S$, but $x \in S$ is allowed.}
    \label{fig:Krasslemma}
\end{figure}

\begin{proof}
Since $xy \nsubseteq S$, there must be a point on $xy$ which does not belong to $S$; since $S$ is closed, there even is a ball, with center on $xy$ such that the closure of the ball is disjoint from $S$. Move the center of that ball along $xy$ towards $y$ until the closure of the ball intersects $S$; this must happen, eventually, since $y \in S$. For the first such ball $B$ (a meaningful definition, since $S$ is compact), pick $z$ in the intersection of $S$ and $B$. Then $B$ witnesses that $z$ is a spherical point. Let $H$ be the supporting half-space at $z$. Since the center of $B$ lies on $xy$ and we were moving the ball towards $y$, the point $z$ must belong to the half of $B$ (as viewed along $xy$) that is closer to $y$; hence $x \notin H$, which completes the proof.
\end{proof}

\begin{theorem}[\Krass]\label{thm:ourKthm}
 Suppose $S \subseteq \RN^d$ is a non-empty compact set and ${\cal H}$ is the set of all supporting half-spaces of $S$. Then $S$ is star-shaped if and only if the intersection of every $d+1$ half-spaces (not necessarily distinct) in ${\cal H}$ is non-empty.
\end{theorem}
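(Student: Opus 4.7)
The plan is to prove the two directions separately, after first identifying $\bigcap \mathcal{H}$ with the set of star centers of $S$. The inclusion $\supseteq$ follows from the observation $S_z \subseteq H_z$ recorded just before Lemma~\ref{lem:Klemma}: any star center $c$ lies in $S_z$ for every spherical $z$, hence in $H_z$. Conversely, if $x \in \bigcap \mathcal{H}$ and some $y \in S$ had $xy \nsubseteq S$, Lemma~\ref{lem:Klemma} would produce a spherical $z$ with $x \notin H_z$; the contradiction forces $x$ to see every point of $S$, and since $S$ is non-empty, picking any $y \in S$ puts $x$ on $xy \subseteq S$. Thus $S$ is star-shaped iff $\bigcap \mathcal{H} \neq \emptyset$, and the forward direction of the theorem is immediate: a star center belongs to every member of $\mathcal{H}$, so every $d+1$ of them have it in common.

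For the converse, finite Helly already gives the finite intersection property for $\mathcal{H}$ under the hypothesis, and I would upgrade this to $\bigcap \mathcal{H} \neq \emptyset$ via Helly for compact convex sets. The obstacle is that halfspaces are unbounded; I therefore plan to manufacture $d+1$ members of $\mathcal{H}$ whose intersection $K$ is compact. For each direction $v \in S^{d-1}$, compactness of $S$ lets me pick a point $z_v$ maximizing $z \mapsto v \cdot z$ on $S$; a direct check shows that for every $r > 0$ the open ball of radius $r$ centered at $z_v + rv$ lies in the open halfspace $\{x : v \cdot x > v \cdot z_v\}$ and therefore misses $S$, so $z_v$ is spherical with supporting halfspace $H_{z_v} = \{x : v \cdot x \leq v \cdot z_v\} \in \mathcal{H}$. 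Choosing unit vectors $v_1,\dots,v_{d+1}$ whose positive hull is all of $\RN^d$ (e.g.\ the vertices of a regular simplex) I would set $K = H_{z_{v_1}} \cap \cdots \cap H_{z_{v_{d+1}}}$; this closed convex set has no recession direction, because any $u \neq 0$ satisfies $v_i \cdot u > 0$ for some $i$, and is therefore compact.

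With $K$ in hand, I would apply Helly for compact convex sets to the family $\{H \cap K : H \in \mathcal{H}\}$. Any finite subfamily of it has non-empty intersection, because the intersection coincides with a finite intersection of members of $\mathcal{H}$---namely the $d+1$ halfspaces defining $K$ together with the finitely many selected $H$'s---which is non-empty by finite Helly under the standing hypothesis. Compactness of $K$ then yields $K \cap \bigcap \mathcal{H} \neq \emptyset$, so $S$ has a star center. The main obstacle is the construction of $K$; this is also precisely where compactness of $S$ (rather than mere closedness) is used, consistent with the Hare--Kenelly example cited after the theorem statement.
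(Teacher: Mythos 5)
Your proof is correct and takes essentially the same route as the paper: the same identification of $\bigcap {\cal H}$ with the set of star centers via Lemma~\ref{lem:Klemma}, and the same upgrade from the $(d+1)$-wise hypothesis via finite Helly plus a compact truncation of the halfspaces. The only difference is cosmetic---where the paper truncates with the $2d$ supporting halfspaces of a minimal axis-aligned bounding box, you truncate with $d+1$ supporting halfspaces in positively spanning simplex directions (and, usefully, you verify explicitly that the maximizers $z_v$ are spherical, a point the paper only asserts).
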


The following argument is based on the proof of~\cite[Theorem 6.17]{V64}; that proof assumes the connectedness of $S$, which is unnecessary.

\begin{proof}
If $S$ is star-shaped, let $c \in S$ be a point that sees all points of $S$. It follows that $c$ belongs to $S_x$ for every $x \in S$. If $H \in {\cal H}$, then there is a spherical point $z \in S$ for which $S_z \subseteq H$, so in particular, $c \in S_z \subseteq H$ and $c$ belongs to the intersection of all half-spaces in ${\cal H}$.

For the other direction, let $T$ be a minimal axis-aligned bounding-box for $S$ (such a box exists, since $S$ is compact). Then $T$ is the intersection of $2d$ half-spaces in ${\cal H}$, since each of the bounding half-planes must contain a point on the boundary of $S$. We are assuming that the intersection of every $d+1$ half-spaces in ${\cal H}$ is non-empty, so by the finite version of Helly's theorem, the intersection of every $2d+(d+1)$ half-spaces in ${\cal H}$ is non-empty. Since we can always include the $2d$ half-spaces whose intersection is $T$, we can conclude that every $d+1$ half-spaces in ${\cal H}$ contain a point in $T$. 

It follows that the intersection of
every $d+1$ sets in $\{H \cap T:\ H \in {\cal H}\}$ is non-empty. Since the sets $H \cap T$ are convex and compact, we can apply Helly's theorem to conclude that there is a point $x$ that belongs to all half-spaces in
${\cal H}$ (this argument works even if $\lvert{\cal H}\rvert \leq d$ since we do not require the half-spaces to be distinct). We claim that $x$ belongs to $S$ and sees all of $S$: Pick an arbitrary point $y \in S$ (this is possible since $S \neq \emptyset$). If $xy \nsubseteq S$, then, by
Lemma~\ref{lem:Klemma}, there is a spherical point $z \in S$ such that $x \notin H$, where $H$ is a supporting half-space at $z$. Since $z$ is spherical, $H$ belongs to ${\cal H}$, contradicting the definition of $x$. It follows that $xy \subseteq S$, so $x \in S$ and $x$ sees $y$. As $y$ was chosen arbitrarily in $S$ this implies that $x$ sees all of $S$, so $S$ is star-shaped.
\end{proof}

\subsection{Complexity of Star-Shapedness}\label{sec:starER}

We call a set $S \subseteq \RN^d$ a {\em region} if there is an integer polynomial $f \in \ZN[x]$ such that $S = \{x \in \RN^d:\ f(x)\geq 0\}$
and the boundary of the set $\partial S$ is $\{x\in \RN^d:\ f(x) = 0\}$. In general one only has $\partial S \subseteq \{x\in \RN^d:\ f(x) = 0\}$, consider, for example, $f = (1-x^2)x^2$ defining the interval $[-1,1]$ in $\RN^1$. We call the region {\em smooth} if the gradient $\nabla f$ does not vanish on $\partial S$, that is $\nabla f(x) \neq 0$ is a normal vector at $x$ for all $x \in \partial S$.

Requiring our semialgebraic sets to be smooth regions is quite restrictive, but this is the best notion for which we were able to prove a complexity upper bound for star-shapedness. We have to pay the price for the restrictiveness though; in the original version of this paper we claimed that the \VR-hardness for star-shapedness is similar to the one for convexity but it turns the details are somewhat technical and intricate.\footnote{The referees rightfully called us on this claim and we found out that restricting ourselves to smooth regions makes the lower bound significantly harder to prove than what we expected.} Since the focus on this paper is really on complexity upper bounds, we have split the proof of Theorem~\ref{thm:starVR} into two parts; we first present the
membership proof in Section~\ref{sec:starmem} followed by the lower bound in Section~\ref{sec:starlower}.

\subsection{$\forall\RN$-membership}\label{sec:starmem}

To show membership in \VR\ we will use Theorem~\ref{thm:ourKthm} to express star-shapedness of $S$ as follows: the intersection of any $d+1$ half-spaces in ${\cal H}$ contains a point in $T$, where ${\cal H}$ is the set of supporting half-spaces for $S$. A supporting half-space can be described by the center $c$ of the ball $B$ through which it is defined, and the spherical point $z$ in which $B$ intersects $S$. Since we assumed that $S$ is a smooth region, the gradient does not vanish on $\partial S$, the vector $cz$ must be a (scalar) multiple of $\nabla f(z)$.

We can therefore express star-shapedness as
\begin{align}\label{eq:star1}
(\forall & z_1, \ldots, z_{d+1} \in \RN^d) 
\biggl(
\bigwedge_{i=1}^{d+1} f(z_i)=0  \biggr) \rightarrow \bigcap_{i=1}^{d+1} H_i \neq \emptyset,
\end{align}
where $H_i = \{y:\ (y-z_i) \cdot \nabla f(z_i) \geq 0\}$, the half-space through $z_i$ that is tangent to $S$ at $z_i$. 
Rewriting the implication $p \rightarrow q$ as $\overline{p} \vee q$, we get
\begin{align}\label{eq:star2}
(\forall & z_1, \ldots, z_{d+1} \in \RN^d) 
\biggl(
\bigvee_{i=1}^{d+1} f(z_i)\neq 0 \biggr) \vee \biggl(\bigcap_{i=1}^{d+1} H_i \neq \emptyset\biggr).
\end{align}
Let us have a closer look at the second condition of the or-clause: $\bigcap_{i=1}^{d+1} H_i \neq \emptyset$. 
We can rewrite this condition as
\begin{equation}\label{eq:star3}
(\exists x \in \RN^d)\ \bigwedge_{i=1}^{d+1} (x-z_i) \cdot \nabla f(z_i)\geq 0
\end{equation}
(where $\cdot$ is the dot-product) which is equivalent to
\begin{equation}\label{eq:star4}
(\exists x \in \RN^d)\ \bigwedge_{i=1}^{d+1} \nabla f(z_i) \cdot x \geq z_i \cdot \nabla f(z_i).
\end{equation}
We can restate~\eqref{eq:star4} as $(\exists x \in \RN^d)\ Ax \geq b$, where $A = (\nabla f(z_1)|\cdots|\nabla f(z_{d+1}))^T$ and $b = (z_1\cdot \nabla f(z_1), \dots, z_{d+1}\cdot \nabla f(z_{d+1}))^T$. By Farkas' lemma, Lemma~\ref{lem:farkas}, this formula is equivalent to 
$(\forall y \in \RN^{d+1}_{\geq 0})\ Ay \neq 0
 \vee b^Ty \leq 0$ which, after expanding out $Ay$ and $b^Ty$ shows that $\bigcap_{i=1}^{d+1} H_i \neq \emptyset$ is equivalent to
\[(\forall y \in \RN^{d+1}_{\geq 0})\ \biggl(\bigvee_{i=1}^d \sum_{j=1}^{d+1}    (\nabla f(z_i))_j y_j\neq 0\biggr) \vee \biggl( \sum_{j=1}^{d+1}  (z_j\cdot \nabla f(z_i)) y_j \leq 0\biggr).\]
We can then replace the $\bigcap_{i=1}^{d+1} H_i \neq \emptyset$ condition in~\eqref{eq:star2} with this universally quantified formula.
At that point, we can move all universal quantifiers to the leading block of quantifiers, giving us the required universal characterization, thereby showing that star-shapedness lies in \VR.

\subsection{$\forall\RN$-hardness}\label{sec:starlower}

To show \VR-hardness of star-shapedness we need to reduce a \VR-complete problem to it; one would think this should be quite straightforward. We know that testing whether a polynomial $f$ contains a root in the unit ball $B^n(0,1)$ is \ER-complete (this is the bounded polynomial feasibility we used earlier). With this $f$ we can define 
\[S = \{(x_1, \ldots, x_{n+1}) \in \RN^{n+1}:\ f(x_1, \ldots, x_n) = 0 \wedge \sum_{i=1}^{n+1} x_i^2 = 1\}.\]
If $f$ does not contain a root in $B^n(0,1)$, then $S$ is empty; otherwise the points of $S$ lie on the $n+1$-dimensional sphere and do not form a star-shaped set. This proof sketch looks promising but has various defects: the empty set is not traditionally considered star-shaped, so the sketch does not give a reduction to star-shapedness. Even if the empty set were star-shaped, the set $S$ is not a smooth region, so the hardness proof would not match our membership proof.

Fixing these issues turns out to be somewhat intricate and technical. We work with a strengthened version of the bounded polynomial feasibility problem in which the function has no zeros outside of the unit ball; since this variant may be of interest in itself we state it as a separate result. 

\begin{theorem}\label{thm:bdfeasstrong}
 Deciding whether a given integer polynomial $f: \RN^n\rightarrow \RN$ has a root in $B^n(0,1)$ is \ER-complete even if the polynomial is non-negative and all its roots lie in $B^n(0,1)$.
\end{theorem}

\begin{proof}
    We start with the bounded feasibility problem for an integer polynomial $f:\RN^n\rightarrow\RN$. Recall that we can assume that $f$ is positive if it has no zero in $B(0,1)$. Using Lemma~\ref{lem:boundedzeros} we can construct a non-negative polynomial $\hat{f}$ such that $f$ has a root in $B^n(0,1)$ if and only if $\hat{f}$ has a root in $B^n(0,2)$. Moreover, $\hat{f}$ has no roots outside $B^n(0,2)$. Then 
    \[g(x_1, \ldots, x_n) := \hat{f}(2x_1, \ldots, 2x_n)\]
    is the polynomial we are looking for. 
\end{proof}

But we need more, we need the sets we construct to be smooth regions. We start with a polynomial that either is positive or has all its zeros in the unit ball, as in Theorem~\ref{thm:bdfeasstrong}. We then show how to
add a new zero (in either case) so that the resulting polynomial has either one or multiple zeros. We turn each zero into a region by perturbing the polynomial very slightly; for that we introduce two perturbation functions in Section~\ref{sec:perturb}; in the case of a single zero, the first perturbation function makes the resulting region convex and non-empty, therefore star-shaped. The second perturbation function ensures that all regions we create in both cases are smooth; for this the perturbations need to be sufficiently small so we do not move the polynomial past any of its critical values; to this end we prove an effective version of Sard's theorem, which we prove in Section~\ref{sec:Sard}.



\subsubsection{An Effective Version of Sard's Theorem}\label{sec:Sard}

A {\em critical point} of $f$ is an $x$ for which the gradient of $f$ vanishes, that is, $\nabla f(x) = 0$.
A {\em critical value} of $f$ is a value $f(x)$ at a critical point $x$ of $f$. We need a lower bound on the positive critical values of an integer polynomial; that bound is achieved by the following lemma, which can be viewed as an effective version of Sard's theorem. 

\begin{lemma}\label{lem:lsepar}
Suppose that $f: \RN^n\rightarrow \RN$ is a non-negative integer polynomial. Then the positive critical values of $f$ are at least $2^{-2^{n^c}}$ for some constant $c$ depending on $f$ (effectively so); that is, $\min\{f(x):\ f(x) > 0 \wedge \nabla f(x) = 0\} > 2^{-2^{n^c}}$.
\end{lemma}

To prove the lower bound in the lemma we use an idea by B\"{u}rgisser and Cucker~\cite[Proof of Theorem~9.2]{BC09}: apply quantifier elimination followed by a univariate separation bound. We take the separation bound from B\"{u}rgisser and Cucker~\cite[Lemma 9.6]{BC09}:

\begin{lemma}\label{lem:unisep}
  If $h: \RN\rightarrow \RN$ is a non-constant univariate integer polynomial of degree $d$, then every two roots of $h$ have distance at least
  \[d^{-(d+2)/2}\tau^{-(d-1)},\]
  where $\tau$ is an upper bound on the coefficients of $h$.
\end{lemma}

\begin{proof}[Proof of Lemma~\ref{lem:lsepar}]
Let 
\[Z = \{ f(x) \in \RN:\ \nabla f(x) =0 \}\]
be the set of all critical values of $f$. Then $Z$ is semialgebraic, since it can be written as $\{z \in \RN:\ (\exists x\in \RN^n)\ f(x) - z = 0 \wedge \nabla f(x) = 0\}$ and $f$ is a polynomial, so we can compute $\nabla f$ explicitly. By the semialgebraic version of Sard's theorem~\cite[Theorem 5.6]{BPR06} the set $Z$ has dimension $0$, which implies that it is a finite set of points (using~\cite[Proposition~2.1.7]{BCR98} for example).  

Expressing $z \in Z$ as
\[(\exists x \in \RN^n)\ z -f(x) = 0 \wedge  \nabla f(x) =0\]
allows us to apply quantifier elimination to eliminate $x$ and obtain an equivalent formula of the form
\[\bigvee_{i=1}^I\bigwedge_{j=1}^{J_i} h_{ij}(z) \Delta_{ij} 0,\]
where $\Delta_{ij} \in \{<,\leq,=,>,\geq\}$, $\sum_{i=1}^I J_i \leq 2^{n^c}$ and the $h_{ij}$ are polynomials of degree at most $2^{n^c}$ with integer coefficients having at most $2^{n^c}$ bits, where $c$ is a fixed constant only depending on $f$ (see~\cite[Proof of Theorem~9.2]{BC09} or~\cite[Theorem 1.3.1]{BPR96}). Define
\[h(z) = \Pi_{i=1}^I \Pi_{j=1}^{J_i} h_{ij}(z).\]
Then $h$ has degree $d$ at most $2^{2n^c}$ with integer coefficients of bitlength at most $2^{2n^c}$, so by Lemma~\ref{lem:unisep}, the roots of $h$, and therefore the elements of $Z$ have distance at least $2^{-2^{n^{c'}}}$ from each other, for some constant $c'$; this implies the statement of the lemma.
\end{proof}

\subsubsection{The Perturbation Functions $g_{\alpha}$ and $\hat{g}$}\label{sec:perturb}

For $\alpha\in (-\infty,1]$ and $m \in \NN$ define the polynomial $g_{\alpha}: \RN^{m+1}\rightarrow\RN$ by
\[
g_\alpha (y) = g_\alpha (y_1, \ldots, y_{m+1}) = 400 \sum_{k\in [m+1]} (y_{k} - y_{k-1}^2)^2 + y_{m+1}^2 - \alpha y_{m}^2,
\]
where $y_0 =1/4$ is fixed. This polynomial has a minimum value just below $0$ and we will use it to slightly perturb other polynomials.

\begin{lemma}\label{lem:lupper} Let $g_{\alpha}$ be as defined above and $\alpha\in (-\infty,1]$.
\begin{description}
    \item[\bf Claim 1] We have
\begin{equation}\label{eqa1}
\min_{y\in \RN^{m+1}} g_\alpha(y) \in \Big[-(1/4)^{(5/6)2^m}, 
(1/4)^{2^{m+2}} - \alpha (1/4)^{2^{m+1}}\Big].
\end{equation}
\item[\bf Claim 2] If $g_\alpha(y)\leq 0$ then 
\begin{equation}\label{eq:bds}
(1/4)^{(7/6)2^k} < y_k < (1/4)^{(5/6)2^k}
\end{equation}
for all $k\in [m]$ and
\begin{equation}\label{eq:bds2}
(y_{m+1})^2 \leq y_m^2.
\end{equation}
\item[\bf Claim 3] If $\alpha\leq 0$ then $g_{\alpha}$ is positive.
\end{description}
\end{lemma}

\begin{proof}
We start by proving the lower bound in~\eqref{eqa1}.
If $\alpha \leq 0$ then $g_{\alpha}$ is non-negative, since it is a sum of squares. This immediately implies the lower bound in~\eqref{eqa1}. Hence, we can assume $\alpha\in (0,1]$. Note that
\[
\frac{\partial g_{\alpha}(y) }{\partial y_{m+1}} = -800y_m^2 + 802y_{m+1}.
\]
To prove the lower bound we can then assume that $y_{m+1}=(400/401)y_m^2$, since this minimizes $g_{\alpha}(y_1,\dots,y_{m+1})$ if the values of other $y_k$'s are fixed. 

Assume that we have $y_1,\dots,y_m,y_{m+1}$ such that $g_{\alpha}(y_1,\dots,y_{m+1}) \leq 0$. For this to happen we need to have
\begin{equation}\label{e0}
400 \sum_{k\in [m]} (y_{k} - y_{k-1}^2)^2 + \frac{400}{401} y_m^4 \leq y_m^2,
\end{equation}
using $\alpha \leq 1$. Dropping all but the last term on the left-hand side of~\eqref{e0} gives us $y_m^2 \leq 401/400$, and, in particular, $|y_m|\leq 3/2$.

If $y_m=0$ then $g_\alpha$ cannot be negative (since the remaining terms are a sum of squares) and it actually cannot be zero (it is impossible for all the squares to be zero). Thus $y_m\neq 0$.
Hence from~\eqref{e0} we can also conclude that
\begin{equation}\label{e1}
400 \sum_{k\in [m]} (y_{k} - y_{k-1}^2)^2 < y_{m}^2.
\end{equation}
In\cite[Lemma 2.7]{SS23,SS25} it is shown that~\eqref{e1} implies~\eqref{eq:bds}.\footnote{The statement of Lemma 2.7 requires $|y_m|\leq 1$ but the proof only uses the bound $|y_m|/2 + 1/10 < 1$. This is implied by $|y_m|\leq 3/2$.} Since we assume that $g_{\alpha}(y)$ is negative this also gives us~\eqref{eq:bds2}. This proves Claim~2 of the lemma. 
All terms but $y_m^2$ in $g_{\alpha}(y)$ are non-negative and hence 
$$g_{\alpha}(y) \geq -y_m^2 \geq -(1/4)^{(5/6)2^m}.$$
To show the upper bound in~\eqref{eqa1} take $y_k = (1/4)^{2^k}$ for $k\in [m+1]$. This proves Claim~1 of the lemma.

This leaves us with the proof of Claim~3. We already saw that $g_{\alpha}$ is non-negative for $\alpha \leq 0$ since in that case it is a sum of squares. If we had $g_{\alpha}(y) = 0$, then $y_{m+1} = 0$ which (inductively) forces  forces $y_k=0$ for all $k\in [m]$, however then the square $(y_1-1/16)^2$ is positive.
\end{proof}

We will also make use of the following variant of $g_{\alpha}$: 
\[
\hat{g} (y) =  g_1 (y) +  400 \sum_{k\in [m+1]} (y_{k} - y_{k-1}^2)^2.
\]

\begin{lemma}\label{lem:lelobo}
For the polynomial $\hat{g}$ as defined above the three claims of Lemma~\ref{lem:lupper} are true (using $\alpha=1$ in~\eqref{eqa1}). Moreover if $|y_i|\geq 1/10$ for some $i\in [m+1]$ then 
\begin{equation}\label{e111}
\hat{g} (y)  \geq 1/4.
\end{equation}
\end{lemma}

\begin{proof}
By definition $\hat{g} (y) \geq g_1 (y)$ and hence Claims~2, 3 and the lower bound in Claim~1 hold for $\hat{g}$. For the upper bound in Claim~1 note that the terms added to $g_1(y)$ are all zero for the point that witnessed the upper bound.

Suppose that $\hat{g} (y) < 1/4$. Using the lower bound on $g_1(y)$ from Lemma~\ref{lem:lupper} then gives us 
\begin{equation}\label{errr}
400 \sum_{k\in [m+1]} (y_{k} - y_{k-1}^2)^2 < 1/4+ (1/4)^{(5/6)2^m} \leq 1/2.
\end{equation}
If $|y_1|\geq \frac{1}{10}$ then $400(y_1 - 1/16)^2 \geq 9/16>1$, contradicting~\eqref{errr}. If for some $k\in [m]$ we have
$|y_k|\leq 1/10$ and $|y_{k+1}|\geq 1/10$ then $400(y_{k+1}-y_k^2)^2\geq 81/25>1$, also contradicting with~\eqref{errr}. Hence~\eqref{errr}
implies that  $|y_i| < 1/10$ for all $i\in [m+1]$.
\end{proof}

Lemma~\ref{lem:lelobo} implies that
\[ \{y\in \RN^{m+1}:\ \hat{g}(y) \leq 0\} \subseteq B^{m+1}_{\infty}(0,1/10), \]
(working with balls in the infinity-norm) and the same is true for $g_{\alpha}$:
\[ \{y\in \RN^{m+1}:\ g_{\alpha}(y) \leq 0\} \subseteq B^{m+1}_{\infty}(0,1/10), \]
by Claim 2 of Lemma~\ref{lem:lupper}, since $(1/4)^{2(5/6)}\leq 1/10$. 
Our next goal is to show that both $g_{\alpha}$ and $\hat{g}$ have unique minimal points in $B^{m+1}_{\infty}(0,1/10)$.

To this end we need the to look at second-order derivatives of $g_{\alpha}$ and $\hat{g}$. Recall that the {\em Hessian} $\nabla^2 f$ of a function $f$ is the (symmetric) matrix containing the second-order partial derivatives $\frac{\partial f}{\partial x_i \partial x_j}$. A matrix is {\em diagonally dominant} if each diagonal entry is larger than the absolute sum of the other entries in its row. A symmetric diagonally dominant matrix is positive definite (this follows from  Gershgorin's circle theorem, see~\cite[Section 6.1]{HJ13}). We will use the fact that a function whose Hessian is positive definite is strictly convex~\cite[Section~3.1.4]{BV04}.

\begin{lemma}\label{lem:ledipo}
Let $\alpha\in [-1,1]$. The Hessian of $g_\alpha(y)$ is  positive definite (even diagonally dominant) on $B^{m+1}_{\infty}(0,1/10)$.
It follows that $g_\alpha(y)$ has a unique critical point in $B^{m+1}_{\infty}(0,1/10)$ and
\[ \{y\in \RN^{m+1}:\ g_{\alpha}(y) \leq 0\} \subseteq B^{m+1}_{\infty}(0,1/10). \]
The same claims are true for $\hat{g}$ in place of $g_{\alpha}$.
\end{lemma}

\begin{proof}
We already argued that $\{y\in \RN^{m+1}:\ g_{\alpha}(y) \leq 0\}$ and $\{y\in \RN^{m+1}:\ \hat{g}(y) \leq 0\}$ are contained in $B^{m+1}_{\infty}(0,1/10)$.

Let $H$ denote the Hessian of $g_{\alpha}/800$, so $H_{i,j} = \frac{\partial^2}{\partial y_i\partial y_j} (g_{\alpha}(y)/800)$. Then $H$ is a tri-diagonal matrix (that is only the entries $H_{i,i-1}, H_{i,i}, H_{i,i+1}$ are non-zero) with the following entries
\begin{equation}\label{eqh1}
H_{i,i} = \Bigg\{\begin{array}{ll}
1 - 2 y_{i+1} + 6 y_i^2 & i\in [m-1] \\
1 - 2 y_{i+1} + 6 y_i^2 - \alpha^2/400 & i=m \\
1+1/400 & i=m+1
\end{array}
\end{equation}
and
\begin{equation}\label{eqh2}
H_{i,i+1} = H_{i+1,i} = - 2 y_i\quad\mbox{for}\quad i\in [m].
\end{equation}
The bound $|y_i|\leq 1/10$ for $i\in [m+1]$ implies that $H$ is diagonally dominant:
\begin{equation}\label{eq:dido1}
H_{i,i} - \sum_{j\neq i} |H_{i,j}| \geq 1 - 6 (1/10) - 1/400 > 1/3,
\end{equation}
and hence positive definite.

The polynomial $g_{\alpha}$ achieves
its minimum in $B^{m+1}(0,1/10)$, since that set is compact, and the point is critical. A strictly convex, continuously differentiable function has at most one critical point on a convex set~\cite[Section~9.1.2]{BV04}.

For polynomial $\hat{g}$ the proof is essentially the same (looking at the Hessian of $\hat{g}/1600$ instead).
\end{proof}

We will use the function $g_\alpha(y_1,\dots,y_{m+1})$ to ``boost'' zeros of a non-negative polynomial $f$ to negative regions in such a way that for positive $f$
no negative regions will be created. The following result is a warm-up version (we will later see how to make the regions smooth by adding $\hat{g}$).

It is known that for every integer polynomial $f: \RN^n\rightarrow \RN$ we can effectively compute a constant $c$ such that if $f$ is positive in a compact domain, say $B^n(0,2)$ for our purposes, then $f(x) > \varepsilon_f := 2^{-2^{n^c}}$ for every $x \in B^n(0,2)$; this follows, for example, from~\cite[Theorem 1]{JP10} (which gives much finer bounds). 

\begin{lemma}\label{lem:boundedzeros}
Let $f:{\mathbb{R}}^n \rightarrow {\mathbb{R}}$ be a non-negative integer polynomial such that either $f$ has no zero in ${\mathbb{R}}^n$ or $f$ has at least one zero in $B^n(0,1)$.
Fix the smallest $m$ for which $(1/4)^{(5/6) 2^m} < \varepsilon_f$ and for $x \in \RN^n, y \in \RN^m$ define
 \[
\hat{f}(x,y) = f(x) + g_\alpha(y),
\]
where $\alpha=1-(x_1^2+\cdots+x_n^2)/2$.
%
Then the following are true:
\begin{itemize}
\item If $f$ has a root in $B^n(0,1)$ then $\hat{f}$ takes a negative value in $B^n(0,2)$.
\item If $f$ has no root in $B^n(0,1)$ then $\hat{f}$ has no root in $B^n(0,2)$.
\item $\hat{f}$ is positive outside $B^n(0,2)$.
\end{itemize}
\end{lemma}

\begin{proof}
Suppose $f$ has a root at $x = (x_1,\dots,x_n)\in B^n(0,1)$. Fix $y \in \RN^{m+1}$ such that $y_k = (1/4)^{2^k}$ for $k=1,\dots,m+1$. Then $\sum_{i=1}^n x_i^2 + \sum_{j=1}^{m+1} y_j^2 <2$ and
\[\hat{f}(x,y)\leq (1/4)^{2^{m+1}} - (1/2) (1/4)^{2^{m}} < 0\]
showing that $\hat{f}$ takes a negative value in $B^n(0,2)$.

Suppose that $f$ has no root in $B^n(0,1)$; by assumption on $f$ it is positive on $\RN^n$ and, in particular, has no root in $B^n(0,2)$. By definition of $m$
we have $(1/4)^{(5/6)2^m} < \min \{f(x):\ x\in B^n(0,2)\}$.
Then, by Claim 1 in Lemma~\ref{lem:lupper}, we have
$\hat{f}(x,y)>0$
for every $x \in B^n(0,2)$ and $y \in \RN^{m+1}$ so $\hat{f}$ has no root in $B^n(0,2)$.

If $x_1^2+\cdots+x_n^2\geq 2$ then the $\alpha$ used in the construction of $\hat{f}$ satisfies $\alpha\geq 0$. Then $g_\alpha(y)>0$ by Claim 3 in Lemma~\ref{lem:lupper}, and hence 
\begin{equation}\label{eqco}
    \hat{f}(x,y)>0. 
\end{equation} 
If $y_1^2+\cdots+y_{m+1}^2\geq 2$ then\eqref{eq:bds} and~\eqref{eq:bds2} cannot be satisfied and, by Claim 2 of Lemma~\ref{lem:lupper} we obtain 
$g_\alpha(y)>0$ and hence again~\eqref{eqco} is true. Combining the two claims we obtain that 
$x_1^2 + \cdots + x_n^2 + y_1^2 + \cdots + y_{m+1}^2\geq 4$ implies~\eqref{eqco}, which shows that if $(x,y) \not\in B^{n+(m+1)}(0,2)$, then $\hat{f}$ is positive, establishing the final claim. 
\end{proof}

\subsubsection{Proof of $\forall\RN$-hardness}

The proof of \VR-hardness of star-shapedness is completed in the next lemma by a reduction from the bounded polynomial feasibility problem. We note that the same reduction also establishes that testing convexity of smooth regions is \VR-hard; since convexity testing lies in \VR, indeed is \VR-complete~\cite{CR92,SS23}, we conclude that convexity testing of smooth regions is \VR-complete. 

\begin{lemma}
Let $f:{\mathbb{R}}^n \rightarrow {\mathbb{R}}$ be a non-negative polynomial such that either 1) $f$ has no zero in ${\mathbb{R}}^n$ or 2) $f$ has at least one zero in $B^n(0,1)$. We can efficiently construct 
a polynomial $\tilde{f}:{\mathbb{R}}^{n'} \rightarrow {\mathbb{R}}$ such that the set $S=\{x\in {\mathbb{R}}^{n'}:\ \tilde{f}(x)\leq 0\}$ is bounded, non-empty, and has $\nabla\tilde{f}(x)\neq 0$ for all $x\in\partial S$.
Moreover the set $S$ is convex in case 1 and $S$ is not star-shaped in case 2, since it will consist of at least two components. 
\end{lemma}

\begin{proof}
Let $\varepsilon_0 := \varepsilon_f$ with $\varepsilon_f$ as defined before Lemma~\ref{lem:boundedzeros} and let $\varepsilon = \min\{f(x):\ x \in B(0,1)\}$. In case 1 we have $\varepsilon > \varepsilon_0$, by definition. For $x = (x_0,x_1, \ldots, x_n) \in \RN^n$
\[
h(x) = (1-x_0)^2 f(x_1,\dots,x_n)  + x_0^2(1-x_0)^2  + x_0^2 \sum_{i=1}^n x_i^2.
\]
All terms in $h$ are non-negative and the last term is zero only if $x_0=0$ or if $\sum_{i=1}^n x_i^2=0$.
The second term is zero if and only if $x_0=0$ or $x_0=1$. Hence $h$ is zero if and only if either $x_0=1$ and $x_1=\cdots=x_n=0$
or $x_0=0$ and $f(x_1,\dots,x_n)=0$. 
Let us see how the assumptions on $f$ translate to $h$. 
In case 1 we have that $h$ has a unique zero at $x_0=1$, $x_1=\cdots=x_n=0$ and in case 2
we have that $h$ has at least two zeros in $B^{n+1}(0,1)$, one with $x_0=1$ and at least one with $x_0=0$. 

We will use the following lower bound on $h$. For $x\in B^{n+1}(0,1)$ we have
\begin{equation}\label{eq:lo}
h(x) \geq x_0^2 \sum_{i=1}^n x_i^2 + (1-x_0)^2\varepsilon.
\end{equation}


We use Lemma~\ref{lem:boundedzeros} to ``boost'' the zeros of $h$ into negative regions and such that there are no zeros outside $B^{(n+1)+(m+1)}(0,2)$ so that all regions are compact. To this end let
\begin{equation}\label{eqM}
M = \max\left(\max_{i,x\in B^{n+1}(0,2)} \left|\frac{\partial}{\partial x_i} f(x)\right|, \max_{i,j,x\in B^{n+1}(0,2)} \left|\frac{\partial}{\partial x_j\partial x_i} f(x)\right|\right),
\end{equation}
and fix $m \in \NN$ such that 
\begin{equation}\label{eqdelta}
\delta:=(1/4)^{(5/6) 2^m} < (\varepsilon_0)^8 \left(48(n+1)(M+1)\right)^{-4}.
\end{equation}
Note that $2\delta/\varepsilon_0 < \sqrt{\delta}$. 

With $y = (y_1, \ldots, y_{m+1}) \in \RN^{m+1}$ define
\[
\hat{f}(x,y) = h(x) + g_\alpha(y),
\]
where $\alpha= 1 - (x_0^2 + x_1^2 + \cdots + x_n^2)/2$. 

We ``smooth'' the polynomial so that the boundaries of the regions have non-zero gradient. Let $m'\geq m$ be such that 
$(1/4)^{(5/6)2^{m'}}$ is smaller than the lower bound from  Lemma~\ref{lem:lsepar} on the non-zero critical values of 
$\hat{f}$. With $z = (z_1,\ldots,z_{m'+1}) \in \RN^{m'+1}$ define
\begin{align*}
\tilde{f}(x,y,z) &= 
    \hat{f}(x,y) + \hat{g}(z)\\
    &=
    h(x)+g_{\alpha}(y)+\hat{g}(z).
\end{align*}
Let $S$ consist of the points $(x,y,z)\in\RN^{n+m+m'+3}$
for which $\tilde{f}(x,y,z)\leq 0$. 

Assume that we are in case 1, that is,  $h$ has a unique zero at $x_0=1$, $x_1=\cdots=x_n=0$.  We will show that $S$
consists of one convex component.  Note that for any $y$'s, $z$'s, and $\alpha\leq 1$ we have
$$
g_\alpha(y) + \hat{g}(z) > -2\delta. 
$$
Hence for $\tilde{f}$ to be negative we need to have $h(x)< 2\delta$, so by using~\eqref{eq:lo} we must have
$x_0^2 \sum_{i=1}^n x_i^2 + (1-x_0)^2\varepsilon<2\delta$. It follows that
\begin{equation}\label{eclose}
(1-x_0)^2\leq 2\delta/\varepsilon \quad\mbox{and}\quad x_i^2\leq 4\delta\quad \mbox{for all $i\in [n]$}.
\end{equation}
The first condition is immediate, and since $0< \delta < \varepsilon_0/48 < \varepsilon/48$ by~\eqref{eqdelta}, the first condition implies that $(1-x_0)^2 \leq 1/24$; in particular, $x_0^2 \geq 1/2$, which then allows us to conclude that $x_i^2 \leq 4\delta$, since $x_0^2x_i^2 < 2\delta$.

We now show that the Hessian of $\tilde{f}$ is positive definite at points where the function has negative value by showing that the Hessian is diagonally dominant. 

The Hessian has a block matrix structure with one block for $z$ and one block for the remaining variables (since in $\tilde{f}$ there are no monomials that contain a $z$-variable together with a non-$z$-variable). The Hessian for the block corresponding to $z$ is positive definite by Lemma~\ref{lem:ledipo}: the block is given by~\eqref{eqh1} and~\eqref{eqh2}, and the diagonal dominance of this block is established by~\eqref{eq:dido1} (with $z$ taking the role of $y$).

Let us look at the remaining block. The terms that combine $x$'s and $y$'s are going to have minimal influence on the Hessian; to make this claim precise it is helpful to first look at the submatrices indexed by $y$'s
and $x$'s separately. 
The Hessian of the block for $y$ is positive definite by Lemma~\ref{lem:ledipo}: the block is given by~\eqref{eqh1} and~\eqref{eqh2}, and the diagonal dominance of this block is established by~\eqref{eq:dido1}. In the Hessian of $\tilde{f}$ in the submatrix for $x$ we have
\begin{subequations}
\begin{align}[left = {\dfrac{\partial^2 \tilde{f}}{\partial x_i \partial x_j}  =  \empheqlbrace}]
&2 + 12 x_0 (x_0-1)  + 2 f(\check{x}) + 2\sum_{i=1}^n x_i^2 & i = j = 0 \label{ehes1} \\
&2 x_0^2 + (1-x_0)^2  \frac{\partial^2}{\partial x_i^2}  f(\check{x}) & i=j \in [n] \label{ehes2} \\
&4 x_0 x_i - 2(1-x_0) \frac{\partial}{\partial x_i} f(\check{x}) & i \in [n] \label{ehes3} \\
&(1-x_0)^2 \frac{\partial^2}{\partial x_i \partial x_j}  f(\check{x}) & i\neq j \in [n]
\end{align}
\end{subequations}
where we write $\check{x}$ for $(x_1, \ldots, x_n)$ that is $x$ without $x_0$.

Condition~\eqref{eclose} implies that~\eqref{ehes1} and~\eqref{ehes2} are close to $2$ and that~\eqref{ehes3} and~\eqref{ehes4} 
are close to $0$ and the submatrix for $x_0,x_1,\dots,x_n$ is diagonally dominant. More precisely for any $i$ we have
\begin{equation}\label{eq:dido2}
\frac{\partial^2}{\partial x_i^2}\tilde{f} - \sum_{j\neq i} \Big|\frac{\partial^2}{\partial x_i \partial x_j}\tilde{f}\Big|\geq
1 - (24+6(n+1))(1+M)\delta^{1/4} > 1/3.
\end{equation}
(To establish~\eqref{eq:dido2} we are using very loose bounds $|x_0|\leq 2$, $x_0^2 \geq 1/2$, $|x_0-1|\leq\delta^{1/4}$ and~\eqref{eqM}, 
\eqref{eqdelta}.)

The only entries in the Hessian that are outside of the submatrix for $y_1,\dots,y_{m+1}$ and the submatrix for $x_0,x_1,\dots,x_n$
are 
\begin{equation}\label{ehes4}
\frac{\partial^2}{\partial y_m \partial x_i} \tilde{f} = 2 y_m x_i.
\end{equation}
Note that $|y_m x_i|<1/3$ (using~\eqref{eq:bds} and~\eqref{eq:bds2})---it is smaller than the right-hand-side of equations
\ref{eq:dido1} and \ref{eq:dido2}. Hence the Hessian of~$\tilde{f}$ is diagonally dominant on the set $S$ of points for which $\tilde{f}\leq 0$. 
This implies that the set $S$ is convex (for any two points $a,b\in S$ and any point $c$ on the segment $a,b$ by convexity of $\tilde{f}$ the value 
of the function $\tilde{f}$ at $c$ must be $\leq 0$ and hence $c\in S$).

In case 2 we have that $h$ has at least two zeros, one for $x_0=1,x_1=\cdots=x_n=0$ and one for $x_0=0$ and $x\in B^n(0,1)$. The function $\tilde{f}$
will be negative for these points (for $y$'s and $z$'s that minimize $g_{\alpha}$ and $\hat{g}$). This creates at least $2$ disconnected components of points where $\tilde{f}$
is $\leq 0$ (since for $x_0=1/2$ the value of $h$ is at least $1/16$ and hence $\tilde{f}$ is positive). 

It only remains to establish regularity of $\tilde{f}$. Assume, for the sake of contradiction that $\tilde{f}$ is not regular,  that is, we have $\tilde{f}=0$ 
and $\nabla\tilde{f}=0$ at the same point $(x,y,z) \in \RN^{n+m+m'+3}$. 
If $|z_i|\geq 1/10$ for some 
$i\in [m'+1]$ then, by Lemma~\ref{lem:lelobo}, $\tilde{f}$ is positive
(since $g_\alpha$ has value at least $-1/100$ and all other terms in $\tilde{f}$ are non-negative).  Hence we must have $z \in B^{m'+1}_{\infty}(0,1/10)$.
Note that $\nabla\tilde{f}$ contains $\nabla \hat{g}(z)$ as a sub-vector and the condition $\nabla\tilde{f}=0$  implies that $z$
is the unique critical point of $\hat{g}$ in the $B^{m'+1}_{\infty}(0,1/10)$. That means $\hat{g}$ has a small negative value which, since $\tilde{f}=0$,
means $h+g_\alpha$ has a small positive value. Note that $\nabla\tilde{f}$ contains $\nabla (h+g_\alpha)(x,y)$ as a sub-vector (since
the $\hat{g}$ part does not depend on the $x$- and $y$-variables). The condition $\nabla\tilde{f}=0$ implies that $(x,y)$ is a critical point
of $h+g_\alpha$. This together with the small positive value of $h+g_\alpha$ contradicts Lemma~\ref{lem:lsepar}.
\end{proof}

\section{Open Questions}

We have shown that results from convex geometry, sometimes together with duality, can be used to settle the computational complexity of various properties of semialgebraic sets---at lower levels than might have been expected. Are there more examples like this? E.g.\ Helly's theorem can be used to show that testing whether the intersection of a parameterized family of convex sets is non-empty lies in \VER, is the problem complete for this level? And Kirchberger's theorem extends to separations by surfaces other than hyperplanes, such as hyperspheres, see~\cite{E93}; do these results leads to new complexity results? Also, as a referee points out, there are colorful versions of many of the results from convex geometry, do any of these yield interesting complexity characterizations? 

Our main result, Theorem~\ref{thm:starVR} on star-shapedness, is restricted to what we called smooth regions; this leaves unanswered the question of how hard it is to test star-shapedness of semialgebraic sets in general; while there does not seem to be an intrinsic reason to believe that star-shapedness becomes truly harder in this case, proving so would require much more sophisticated tools than the ones we used here. (Also: does testing star-shapedness remain in \VR\ if we drop the compactness assumption?)

Resolving this issue is relevant to determine the complexity of testing cone-shapedness. A set $S$ is {\em cone-shaped} if there is a point $a$ such that $(0,\infty)\cdot (S-a) \subseteq S-a$, see~\cite[Section 3]{HHMM20}. By definition, cone-shapedness lies in \EVR. Smooth regions are not cone-shaped though, so one needs to extend the methods for star-shapedness to work for non-smooth regions as well. Otherwise, cone-shapedness seems a promising candidate for our approach; since there is a \Krass-style characterization due to Cel~\cite{C91}. He showed that a set $S$ in $\RN^d$ is cone-shaped if and only if any $d+1$ spherical points (in our definition) are $S$-visible from a common point, where being {\em $S$-visible} means that a ray starting at the common point through the spherical point lies in $S$.

A similar issues arises for other properties; e.g.\ closedness and compactness are known to be \VR-complete for {\em basic} semialgebraic sets~\cite[Lemma 6.20, Corollary 9]{BC09}, but the complexity for semialgebraic sets remains open. Another example we mentioned earlier: the art gallery problem;  membership of the art gallery problem in \ER\ is based on the underlying set being a polygon; without this restriction, the current best upper bound on the art gallery problem is \VER, and it is not clear whether the universal quantifier can be eliminated in this case. 

One could attack the art gallery problem by looking at the case in which only a constant number $k$ of guards are allowed; this notion is known as $k$-star-shapedness and has been investigated in the literature, e.g.\ Koch and Marr~\cite{KM66} and Larman~\cite{L67} proved \Krass-style characterization of $2$-star-shaped sets; there seem to be no \Krass-style characterizations for larger $k$ (though there is a related result for $k = 3$ in the plane~\cite{B89}).

The polygonal vs semialgebraic issue in the art gallery problem also suggests the question of how hard it is to test whether a semialgebraic set is polygonal or polytopal in general. Dumortier, Gyssens, and Vandeurze~\cite{DGVVG99,DGVVG99b} call a semialgebraic set {\em semilinear} if all the defining equalities and inequalities are linear. They show that recognizing semilinearity lies in \VEVR; is this best possible? Is the problem hard for \ER\ or \VR\ or higher levels? And how about testing whether a semialgebraic set is a simplex?

Finally, star-shapedness itself deserves a closer look; it exists in many variants, see~\cite{HHMM20} for a recent survey, some with \Krass-style characterizations and these may make good candidates for further examples of unexpected complexity classifications. We'd like to explicitly mention a result by Valentine~\cite{V70} on {\em shoreline visibility}: if every five points in a plane set $S$, the {\em swimmers}, see a common boundary point of a compact convex set $C$ in $\RN^2$, the {\em coast}, then all points in $S$ can see a common boundary point of $C$ (and this remains true with $d+1$ points in $\RN^d$ if the boundary is smooth and $C$ is strictly convex). Can this result be used to determine the computational complexity of shoreline visibility for $S$ and $C$? By definition, shoreline visibility for semialgebraic sets $S$ and $C$ belongs to \EVR, and Valentine's result immediately implies a \VER-characterization (assuming $C$ is a smooth region and strictly convex), so that the problem belongs to $\EVR\cap\VER$; can this upper bound be improved?

\section*{Acknowledgments}

We would like to thank the anonymous reviewers whose feedback has allowed us to fix several errors and improve the overall presentation.

\bibliographystyle{plainurl}
\bibliography{stars}

@article {F02,
  AUTHOR = {Julius Farkas},
  TITLE = {Theorie der einfachen {U}ngleichungen},
  JOURNAL =  {Journal für die reine und angewandte Mathematik},
  VOLUME = {124},
  PAGES = {1--27},
  YEAR = {1902},
  ISSN = {0075-4102}
}

@book{HJ13,
    AUTHOR = {Horn, Roger A. and Johnson, Charles R.},
     TITLE = {Matrix analysis},
   EDITION = {Second},
 PUBLISHER = {Cambridge University Press, Cambridge},
      YEAR = {2013},
     PAGES = {xviii+643},
      ISBN = {978-0-521-54823-6},
   MRCLASS = {15-01},
  MRNUMBER = {2978290},
MRREVIEWER = {Mohammad\ Sal\ Moslehian},
}

@article {SS17,
    AUTHOR = {Schaefer, Marcus and \v{S}tefankovi\v{c}, Daniel},
     TITLE = {Fixed points, {N}ash equilibria, and the existential theory of
              the reals},
   JOURNAL = {Theory Comput. Syst.},
  FJOURNAL = {Theory of Computing Systems},
    VOLUME = {60},
      YEAR = {2017},
    NUMBER = {2},
     PAGES = {172--193},
      ISSN = {1432-4350},
   MRCLASS = {68Q15 (03D78 91A44)},
  MRNUMBER = {3600379},
       DOI = {10.1007/s00224-015-9662-0},
       URL_ = {https://doi.org/10.1007/s00224-015-9662-0},
}

@article {AAM22,
    AUTHOR = {Abrahamsen, Mikkel and Adamaszek, Anna and Miltzow, Tillmann},
     TITLE = {The art gallery problem is {$\exists\mathbb{R}$}-complete},
   JOURNAL = {J. ACM},
  FJOURNAL = {Journal of the ACM},
    VOLUME = {69},
      YEAR = {2022},
    NUMBER = {1},
     PAGES = {Art. 4, 70},
      ISSN = {0004-5411},
   MRCLASS = {68U05 (68Q17)},
  MRNUMBER = {4402363},
       DOI = {10.1145/3486220},
       URL_ = {https://doi.org/10.1145/3486220},
}

@article{M14,
  author    = {{Matou{\v{s}}ek}, Jir{\'{\i}} },
  title     = {Intersection graphs of segments and $\exists\mathbb{R}$},
  journal = {ArXiv e-prints},
    archivePrefix = "arXiv",
    eprint = {1406.2636},
    note_old = {\arxiv{1406.2636} (last accessed 6/10/2020)},
    year = 2014,


}

@article {JP10,
    AUTHOR = {Jeronimo, Gabriela and Perrucci, Daniel},
     TITLE = {On the minimum of a positive polynomial over the standard
              simplex},
   JOURNAL = {J. Symbolic Comput.},
  FJOURNAL = {Journal of Symbolic Computation},
    VOLUME = {45},
      YEAR = {2010},
    NUMBER = {4},
     PAGES = {434--442},
      ISSN = {0747-7171},
     doi = {10.1016/j.jsc.2010.01.001},
     url_ = {https://doi.org/10.1016/j.jsc.2010.01.001},
}

@book {BPR06,
    AUTHOR = {Basu, Saugata and Pollack, Richard and Roy,
              Marie-Fran{\c{c}}oise},
     TITLE = {Algorithms in real algebraic geometry},
    SERIES = {Algorithms and Computation in Mathematics},
    VOLUME = {10},
   EDITION = {Second},
 PUBLISHER = {Springer-Verlag},
   ADDRESS = {Berlin},
      YEAR = {2006},
     PAGES = {x+662},
      ISBN = {978-3-540-33098-1; 3-540-33098-4},
      url_ = {https://doi.org/10.1007/3-540-33099-2},
      doi = {10.1007/3-540-33099-2},
}

@article {BC09,
    AUTHOR = {B\"{u}rgisser, Peter and Cucker, Felipe},
     TITLE = {Exotic quantifiers, complexity classes, and complete problems},
   JOURNAL = {Found. Comput. Math.},
  FJOURNAL = {Foundations of Computational Mathematics. The Journal of the
              Society for the Foundations of Computational Mathematics},
    VOLUME = {9},
      YEAR = {2009},
    NUMBER = {2},
     PAGES = {135--170},
      ISSN = {1615-3375},
   MRCLASS = {68Q15 (68Q17)},
  MRNUMBER = {2496557},
       DOI = {10.1007/s10208-007-9006-9},
       URL_ = {https://doi.org/10.1007/s10208-007-9006-9},
}

@book {BCR98,
    AUTHOR = {Bochnak, Jacek and Coste, Michel and Roy, Marie-Fran{\c{c}}oise},
     TITLE = {Real algebraic geometry},
    SERIES = {Ergebnisse der Mathematik und ihrer Grenzgebiete (3) [Results
              in Mathematics and Related Areas (3)]},
    VOLUME = {36},
      NOTE = {Translated from the 1987 French original,
              Revised by the authors},
 PUBLISHER = {Springer-Verlag, Berlin},
      YEAR = {1998},
     PAGES = {x+430},
      ISBN = {3-540-64663-9},
   MRCLASS = {14Pxx (11E25 32C05 58A07)},
  MRNUMBER = {1659509},
MRREVIEWER = {A.\ Tognoli},
       DOI = {10.1007/978-3-662-03718-8},
       URL = {https://doi.org/10.1007/978-3-662-03718-8},
}

@article {BPR96,
    AUTHOR = {Basu, Saugata and Pollack, Richard and Roy, Marie-Fran{\c{c}}oise},
     TITLE = {On the combinatorial and algebraic complexity of quantifier
              elimination},
   JOURNAL = {J. ACM},
  FJOURNAL = {Journal of the ACM},
    VOLUME = {43},
      YEAR = {1996},
    NUMBER = {6},
     PAGES = {1002--1045},
      ISSN = {0004-5411,1557-735X},
   MRCLASS = {03C10 (03B25 03C60 03D15 68Q25 68Q40)},
  MRNUMBER = {1434910},
MRREVIEWER = {Thanases\ Pheidas},
       DOI = {10.1145/235809.235813},
       URL = {https://doi.org/10.1145/235809.235813},
}

@incollection {CR92,
    AUTHOR = {Cucker, Felipe and Rossell\'{o}, Francesc},
     TITLE = {On the complexity of some problems for the {B}lum, {S}hub \&
              {S}male model},
 BOOKTITLE = {L{ATIN} '92 ({S}\~{a}o {P}aulo, 1992)},
    SERIES = {Lecture Notes in Comput. Sci.},
    VOLUME = {583},
     PAGES = {117--129},
 PUBLISHER = {Springer},
   address = {Berlin},
      YEAR = {1992},
   MRCLASS = {68Q15},
  MRNUMBER = {1253351},
       DOI = {10.1007/BFb0023823},
       URL_ = {https://doi.org/10.1007/BFb0023823},
}

@article {K99,
    AUTHOR = {Koiran, Pascal},
     TITLE = {The real dimension problem is {${\rm NP}_{\mathbb{R}}$}-complete},
   JOURNAL = {J. Complexity},
  FJOURNAL = {Journal of Complexity},
    VOLUME = {15},
      YEAR = {1999},
    NUMBER = {2},
     PAGES = {227--238},
      ISSN = {0885-064X},
   MRCLASS = {68Q05 (68Q15)},
  MRNUMBER = {1693880},
MRREVIEWER = {Klaus Meer},
       DOI = {10.1006/jcom.1999.0502},
       URL_ = {https://doi.org/10.1006/jcom.1999.0502},
}

@incollection {K00,
    AUTHOR = {Koiran, Pascal},
     TITLE = {The complexity of local dimensions for constructible sets},
      NOTE = {Real computation and complexity (Schloss Dagstuhl, 1998)},
   JOURNAL = {J. Complexity},
PUBLISHER  = {.},
Booktitle  = {.},
  FJOURNAL = {Journal of Complexity},
    VOLUME = {16},
      YEAR = {2000},
    Issue = {1},
     PAGES = {311--323},
      ISSN = {0885-064X},
   MRCLASS = {14Q99},
  MRNUMBER = {1762407},
       DOI = {10.1006/jcom.1999.0536},
       URL = {https://doi.org/10.1006/jcom.1999.0536},
}

@book {BV04,
    AUTHOR = {Boyd, Stephen and Vandenberghe, Lieven},
     TITLE = {Convex optimization},
 PUBLISHER = {Cambridge University Press},
 address = {Cambridge},
      YEAR = {2004},
     PAGES = {xiv+716},
      ISBN = {0-521-83378-7},
   MRCLASS = {90-01 (90C05 90C25 90C46 90C51)},
  MRNUMBER = {2061575},
MRREVIEWER = {Dan Butnariu},
       DOI = {10.1017/CBO9780511804441},
       URL_ = {https://doi.org/10.1017/CBO9780511804441},
}

@incollection {S13,
    AUTHOR = {Schaefer, Marcus},
     TITLE = {Realizability of graphs and linkages},
 BOOKTITLE = {Thirty essays on geometric graph theory},
     PAGES = {461--482},
 PUBLISHER = {Springer},
 address = {New York},
      YEAR = {2013},
      ISBN = {978-1-4614-0109-4; 978-1-4614-0110-0},
   MRCLASS = {05C62 (68R10 68U05)},
  MRNUMBER = {3205168},
MRREVIEWER = {Lily\ Chen},
       DOI = {10.1007/978-1-4614-0110-0\_24},
       URL_ = {https://doi.org/10.1007/978-1-4614-0110-0_24},
}

@article {V70,
    AUTHOR = {Valentine, Frederick A.},
     TITLE = {Visible shorelines},
   JOURNAL = {Amer. Math. Monthly},
  FJOURNAL = {American Mathematical Monthly},
    VOLUME = {77},
      YEAR = {1970},
     PAGES = {146--152},
      ISSN = {0002-9890,1930-0972},
   MRCLASS = {52.34},
  MRNUMBER = {257881},
MRREVIEWER = {H.\ E.\ Debrunner},
       DOI = {10.2307/2317326},
       URL = {https://doi.org/10.2307/2317326},
}

@article {HHMM20,
    AUTHOR = {Hansen, Guillermo L. and Herburt, Irmina and Martini, Horst and Moszy\'nska, Maria},
     TITLE = {Starshaped sets},
   JOURNAL = {Aequationes Math.},
  FJOURNAL = {Aequationes Mathematicae},
    VOLUME = {94},
      YEAR = {2020},
    NUMBER = {6},
     PAGES = {1001--1092},
      ISSN = {0001-9054,1420-8903},
   MRCLASS = {52A30 (26B25 46B20 47H10 54E52 90C26 90C48)},
  MRNUMBER = {4171832},
MRREVIEWER = {Hubert\ Przybycie\'n},
       DOI = {10.1007/s00010-020-00720-7},
       URL = {https://doi.org/10.1007/s00010-020-00720-7},
}

@article {B79,
    AUTHOR = {Breen, Marilyn},
     TITLE = {The dimension of the kernel of a planar set},
   JOURNAL = {Pacific J. Math.},
  FJOURNAL = {Pacific Journal of Mathematics},
    VOLUME = {82},
      YEAR = {1979},
    NUMBER = {1},
     PAGES = {15--21},
      ISSN = {0030-8730,1945-5844},
   MRCLASS = {52A30 (52A35)},
  MRNUMBER = {549829},
MRREVIEWER = {Gerald\ Beer},
       URL = {http://projecteuclid.org/euclid.pjm/1102785057},
}

@article {KM66,
    AUTHOR = {Koch, C. F. and Marr, John Maurice},
     TITLE = {A characterization of unions of two star-shaped sets},
   JOURNAL = {Proc. Amer. Math. Soc.},
  FJOURNAL = {Proceedings of the American Mathematical Society},
    VOLUME = {17},
      YEAR = {1966},
     PAGES = {1341--1343},
      ISSN = {0002-9939,1088-6826},
   MRCLASS = {52.30},
  MRNUMBER = {200790},
MRREVIEWER = {W.\ J.\ Firey},
       DOI = {10.2307/2035737},
       URL = {https://doi.org/10.2307/2035737},
}

@article{L67,
 author = {Larman, David G.},
 title = {On the union of two starshaped sets},
 fjournal = {Pacific Journal of Mathematics},
 journal = {Pac. J. Math.},
 issn = {1945-5844},
 volume = {21},
 pages = {521--524},
 year = {1967},
 language = {English},
 doi = {10.2140/pjm.1967.21.521},
 zbMATH = {3239956},
 Zbl = {0148.36902}
}

@article {B89,
    AUTHOR = {Breen, Marilyn},
     TITLE = {Unions of three starshaped sets in {${\bf R}^2$}},
   JOURNAL = {J. Geom.},
  FJOURNAL = {Journal of Geometry},
    VOLUME = {36},
      YEAR = {1989},
    NUMBER = {1-2},
     PAGES = {8--16},
      ISSN = {0047-2468,1420-8997},
   MRCLASS = {52A30 (52A35)},
  MRNUMBER = {1022333},
MRREVIEWER = {John\ D.\ Baildon},
       DOI = {10.1007/BF01231019},
       URL = {https://doi.org/10.1007/BF01231019},
}

@article {K46,
    AUTHOR = {Krasnosselsky, M.},
     TITLE = {Sur un crit\`ere pour qu'un domaine soit \'{e}toil\'{e}},
   JOURNAL = {Rec. Math. [Mat. Sbornik] N.S.},
  FJOURNAL = {Rec. Math. [Mat. Sbornik] N.S.},
    VOLUME = {19/61},
      YEAR = {1946},
     PAGES = {309--310},
   MRCLASS = {56.0X},
  MRNUMBER = {20248},
MRREVIEWER = {H.\ Wallman},
}

@article{SS23,
    AUTHOR = {Schaefer, Marcus and {\v{S}}tefankovi\v{c}, Daniel},
    title={Beyond the Existential Theory of the Reals},
    JOURNAL = {Theory Comput. Syst.},
    doi = {10.1007/s00224-023-10151-x},
    url = {https://doi.org/10.1007/s00224-023-10151-x},
    ISSN = {1433-0490},
    year={2023},
}

@article{SS25,
    AUTHOR = {Schaefer, Marcus and {\v{S}}tefankovi\v{c}, Daniel},
      title={Beyond the Existential Theory of the Reals}, 
    journal   = {CoRR},
    volume    = {abs/2210.00571},
    url       = {http://arxiv.org/abs/2210.00571},
      year={2025}, 
      eprint={2210.00571},
      archivePrefix={arXiv},
      primaryClass={cs.CC},
}

@book {V64,
    AUTHOR = {Valentine, Frederick A.},
     TITLE = {Convex sets},
    SERIES = {McGraw-Hill Series in Higher Mathematics},
 PUBLISHER = {McGraw-Hill Book Co.},
   address = {New York-Toronto-London},
      YEAR = {1964},
     PAGES = {ix+238},
   MRCLASS = {52.00},
  MRNUMBER = {170264},
MRREVIEWER = {P.\ C.\ Hammer},
}

@article {HK68,
    AUTHOR = {Hare, William R. and Kenelly, John W.},
     TITLE = {Intersections of maximal starshaped sets},
   JOURNAL = {Proc. Amer. Math. Soc.},
  FJOURNAL = {Proceedings of the American Mathematical Society},
    VOLUME = {19},
      YEAR = {1968},
     PAGES = {1299--1302},
      ISSN = {0002-9939,1088-6826},
   MRCLASS = {52.25},
  MRNUMBER = {233283},
MRREVIEWER = {D.\ G.\ Larman},
       DOI = {10.2307/2036201},
       URL = {https://doi.org/10.2307/2036201},
}

@article {C91,
    AUTHOR = {Cel, Jaros{\l}av},
     TITLE = {Krasnosel\textquotesingle ski\u{\i}-type characterizations for a cone},
   JOURNAL = {Geom. Dedicata},
  FJOURNAL = {Geometriae Dedicata},
    VOLUME = {39},
      YEAR = {1991},
    NUMBER = {2},
     PAGES = {139--153},
      ISSN = {0046-5755,1572-9168},
   MRCLASS = {52A30 (52A35)},
  MRNUMBER = {1119649},
MRREVIEWER = {Marilyn\ Breen},
       DOI = {10.1007/BF00182290},
       URL = {https://doi.org/10.1007/BF00182290},
}

@article {DGVVG99,
    AUTHOR = {Dumortier, Freddy and Gyssens, Marc and Vandeurzen, Luc and
              Van Gucht, Dirk},
     TITLE = {On the decidability of semilinearity for semialgebraic sets
              and its implications for spatial databases},
    _BOOKTITLE = {Sixteenth ACM SIGACT-SIGMOD-SIGART Symposium on Principles of
              Database Systems (Tucson, AZ, 1997)},
   JOURNAL = {J. Comput. System Sci.},
  FJOURNAL = {Journal of Computer and System Sciences},
    VOLUME = {58},
      YEAR = {1999},
    NUMBER = {3},
     PAGES = {535--571},
      ISSN = {0022-0000,1090-2724},
   MRCLASS = {68P15 (03B70)},
  MRNUMBER = {1705081},
       DOI = {10.1006/jcss.1999.1625},
       URL = {https://doi.org/10.1006/jcss.1999.1625},
}

@article {DGVVG99b,
    AUTHOR = {Dumortier, Freddy and Gyssens, Marc and Vandeurzen, Luc and
              Van Gucht, Dirk},
     TITLE = {Corrigendum: ``{O}n the decidability of semilinearity for
              semialgebraic sets and its implications for spatial
              databases''},
   JOURNAL = {J. Comput. System Sci.},
  FJOURNAL = {Journal of Computer and System Sciences},
    VOLUME = {59},
      YEAR = {1999},
    NUMBER = {3},
     PAGES = {557--562},
      ISSN = {0022-0000,1090-2724},
   MRCLASS = {68P15 (03B70)},
  MRNUMBER = {1726776},
       DOI = {10.1006/jcss.1999.1696},
       URL = {https://doi.org/10.1006/jcss.1999.1696},
}

@article {K03,
    AUTHOR = {Kirchberger, Paul},
     TITLE = {\"{U}ber {T}chebychefsche {A}nn\"{a}herungsmethoden},
   JOURNAL = {Math. Ann.},
  FJOURNAL = {Mathematische Annalen},
    VOLUME = {57},
      YEAR = {1903},
    NUMBER = {4},
     PAGES = {509--540},
      ISSN = {0025-5831,1432-1807},
   MRCLASS = {99-04},
  MRNUMBER = {1511222},
       DOI = {10.1007/BF01445182},
       URL = {https://doi.org/10.1007/BF01445182},
}

@book {S06,
    AUTHOR = {Steffens, Karl-Georg},
     title = {The history of approximation theory. {From} {Euler} to {Bernstein}},
 PUBLISHER = {Birkh\"{a}user Boston, Inc.},
   address = {Boston, MA},
      YEAR = {2006},
     PAGES = {xx+219},
      ISBN = {978-0-8176-4353-9; 0-8176-4353-2},
   MRCLASS = {41-03 (01A55 01A60)},
  MRNUMBER = {2190312},
MRREVIEWER = {B.\ Boyanov},
}

@incollection {E93,
    AUTHOR = {Eckhoff, J\"{u}rgen},
     TITLE = {Helly, {R}adon, and {C}arath\'{e}odory type theorems},
 BOOKTITLE = {Handbook of convex geometry, {V}ol. {A}, {B}},
     PAGES = {389--448},
 PUBLISHER = {North-Holland},
   address = {Amsterdam},
      YEAR = {1993},
      ISBN = {0-444-89598-1},
   MRCLASS = {52A35 (05D05 52-02)},
  MRNUMBER = {1242986},
MRREVIEWER = {John\ R.\ Reay},
}

@article{S23c,
  author    = {Stade, Jack},
      title={The Point-Boundary Art Gallery Problem is $\exists\mathbb{R}$-hard},
  journal   = {CoRR},
  volume    = {abs/2210.12817},
      year={2023},
  url       = {https://doi.org/10.48550/arXiv.2210.12817},
  doi       = {10.48550/arXiv.2210.12817},
  eprinttype = {arXiv},
  eprint    = {2210.12817},
}

@Article{S23b,
     AUTHOR = {Schaefer, Marcus},
   title = {The Complexity of Angular Resolution},
   JOURNAL = {J. Graph Algorithms Appl.},
  FJOURNAL = {Journal of Graph Algorithms and Applications},
   year = {2023},
   volume = {27},
   number = {7},
   pages = {565--580},
   doi = {10.7155/jgaa.00634},
}

@article{SCM24,
      title={The Existential Theory of the Reals as a Complexity Class: A Compendium}, 
      author={Schaefer, Marcus and Cardinal, Jean and Miltzow, Tillman},
      year={2024},
    journal   = {CoRR},
      eprint={2407.18006},
      archivePrefix={arXiv},
      primaryClass={cs.CC},
      url={https://arxiv.org/abs/2407.18006}, 
}

@incollection {G06,
    AUTHOR = {Goldreich, Oded},
     TITLE = {On promise problems: a survey},
 BOOKTITLE = {Theoretical computer science},
    SERIES = {Lecture Notes in Comput. Sci.},
    VOLUME = {3895},
     PAGES = {254--290},
 PUBLISHER = {Springer, Berlin},
      YEAR = {2006},
      ISBN = {978-3-540-32880-3; 3-540-32880-7},
   MRCLASS = {68Q15},
  MRNUMBER = {2248667},
       DOI = {10.1007/11685654\_12},
       URL = {https://doi.org/10.1007/11685654_12},
}

@book {S86,
    AUTHOR = {Schrijver, Alexander},
     TITLE = {Theory of linear and integer programming},
    SERIES = {Wiley-Interscience Series in Discrete Mathematics},
      NOTE = {A Wiley-Interscience Publication},
 PUBLISHER = {John Wiley \& Sons, Ltd., Chichester},
      YEAR = {1986},
     PAGES = {xii+471},
      ISBN = {0-471-90854-1},
   MRCLASS = {90C05 (90C10)},
  MRNUMBER = {874114},
MRREVIEWER = {J\"urgen\ K\"ohler},
}

@article{HMMP25,
      title={Oracle Separations for $\mathbb{R}${PH}}, 
      author={Hamm, Thekla and Meijer, Lucas and Miltzow, Tillmann  and Patro, Subhasree},
      year={2025},
      eprint={2502.09279},
      journal = {ArXiv e-prints},
      archivePrefix={arXiv},
      primaryClass={cs.CC},
      url={https://arxiv.org/abs/2502.09279}, 
}

\end{document}